\documentclass[journal]{IEEEtran}

\usepackage{color}
\usepackage{graphicx}
\usepackage{amsfonts}
\usepackage{subfigure}
\usepackage{amsmath}
\usepackage{cite}

\usepackage{algorithmic}
\usepackage{amsthm}
\usepackage{epsfig}
\usepackage{epstopdf}
\usepackage{verbatim}
\usepackage{amsthm}
\usepackage{algorithmic}
\usepackage[linesnumbered,ruled,vlined]{algorithm2e}

\newtheorem{lemma}{Lemma}
\newtheorem*{definition}{Definition}
\newtheorem*{remark}{Remark}
\newtheorem{assumption}{Assumption}
\newtheorem{property}{Property}
\newtheorem{p}{P-}

\newcommand{\mat}{\boldsymbol}

\newcommand{\AND}{\mathbf{AND}}

\hyphenation{op-tical net-works semi-conduc-tor}

\begin{document}

\title{Fair Dynamic Spectrum Management in Licensed Shared Access Systems}

\author{M.~Majid~Butt,~\IEEEmembership{Senior~Member,~IEEE,}~Irene~Macaluso,~Carlo~Galiotto~and~Nicola~Marchetti,~\IEEEmembership{Senior~Member,~IEEE}% <-this % stops a space
\IEEEcompsocitemizethanks{\IEEEcompsocthanksitem M. Majid Butt was with the Irish Research Centre for Future Networks and Communications (CONNECT), Trinity College Dublin,
Dublin 2, Ireland. He is with Nokia Bell Labs, Paris France.
E-mail: majid.butt@ieee.org.}
\IEEEcompsocitemizethanks{\IEEEcompsocthanksitem I. Macaluso, C. Galiotto and N. Marchetti are with the Irish Research Centre for Future Networks and Communications (CONNECT), Trinity College Dublin,
Dublin 2, Ireland.
E-mail: \{macalusi, galiotc, nicola.marchetti\}@tcd.ie.}

\thanks{The material in this paper has been presented in part at PIMRC 2016 \cite{Majid_PIMRC:2016}.}
\thanks{The project ADEL acknowledges the financial support of the Seventh Framework Programme for Research of the
European Commission under grant number: 619647. We also acknowledge support from the Science Foundation Ireland
under grants No. 13/RC/2077 and No. 10/CE/i853.
}
}

\IEEEtitleabstractindextext{%
\begin{abstract}
Licensed Shared Access (LSA) is a spectrum sharing mechanism where bandwidth is shared between a primary network, called incumbent, and a secondary mobile network. In this work, we address dynamic spectrum management mechanisms for LSA systems. We propose a fair spectrum management algorithm for distributing incumbent's available spectrum among mobile networks. Then, we adapt the proposed algorithm to take mobile network operator's regulatory compliance aspect into account and penalize the misbehaving network operators in spectrum allocation. We extend our results to the scenario where more than one incumbent offer spectrum to the mobile operators in a service area and propose various protocols, which ensure long term fair spectrum allocation within the individual LSA networks. Finally, we numerically evaluate the performance of the proposed spectrum allocation algorithms and compare them using various performance metrics. For the single incumbent case, the numerical results show that the spectrum allocation is fair when the mobile operators follow the spectrum access regulations. We demonstrate the effect of our proposed penalty functions on spectrum allocation when the operators do not comply with the regulatory aspects. For the multi-incumbent scenario, the results show a trade-off between efficient spectrum allocation and flexibility in spectrum access for our proposed algorithms.
\end{abstract}

\begin{IEEEkeywords}
Licensed Shared Access, Dynamic spectrum access, Fair resource allocation, CBRS.
\end{IEEEkeywords}
}

\maketitle

\IEEEdisplaynontitleabstractindextext

\IEEEpeerreviewmaketitle

\section{Introduction}
\label{sect:intro}

It is a time of unprecedented change, where traffic on telecommunication networks is growing exponentially, and many new services and applications are continuously emerging. The
advent of mobile internet has led to phenomenal growth of the mobile data traffic over the past few years. As the features of the envisioned technologies and services of the fifth generation (5G) and beyond mobile communication systems dictate, this trend is expected to continue for the years to come. Future 'bandwidth hungry' mobile broadband (MBB) communication services require additional spectrum. This goal can be reached with the following methods: (1) Clearing (a.k.a. refarming) spectrum and allocating it to MBB; (2) Sharing spectrum between
existing incumbents and mobile network operators (MNO); (3) Using millimeter wave technology.

In particular, spectrum sharing is seen by national regulators, in both Europe and USA, as a viable solution for allocating additional spectrum to MBB in a timely fashion, since
technologies that are capable to implement it already exist. There are two main recently proposed approaches to spectrum sharing in licensed bands:
\begin{itemize}
  \item Authorized Shared Access / Licensed Shared Access (ASA/LSA) \cite{Mueck:14, NSN-Qualcomm:11}. In ASA/LSA, MNOs can use (on an exclusive basis) the licensed spectrum owned
      by other incumbents when and where these incumbents are not using it. In this way, the incumbents are protected from harmful interference and the licensees benefit from the
      provision of predictable QoS. The band under consideration for LSA use is 2.3-2.4GHz in Europe.
  \item Citizens Broadband Radio Service (CBRS), which, in addition to highest-priority incumbents and high-priority licensed users, also allows low-priority unlicensed users to
      access the spectrum on a shared basis, as long as they do not interfere with higher priority users \cite{FCC:14,PCAST:12}. However, for the latter type of users there are
      no QoS guarantees. The band foreseen for CBRS deployment is the 3.5GHz band in the USA.
\end{itemize}
We focus on LSA networks in this work, though most of the proposed concepts can be generalized to other licensed spectrum sharing systems.

\subsection{Contributions}
In general, LSA system depends on various factors including static/dynamic geographic zones, length of long term agreement and potential increase in MNO capacity, etc. Most of the
works on resource allocation in LSA assume that MNOs get rights for exclusive use of the spectrum in a particular area (when the incumbent is not present) without any spectrum
sharing with other MNOs. However, if the MNO is not using the available incumbent spectrum because of lack of demand, the spectrum remains unused. Our approach is to extend the LSA
spectrum access framework to 'as required' model, such that the MNOs' spectrum demand is taken into account at the spectrum allocation instant to minimize/eliminate unused
incumbent spectrum.

To the best of our knowledge, this work is one of the first works for general operating scenarios in LSA networks, addressing both single and multi licensee and
incumbent cases, encompassing both resource allocation and policy violation and enforcement aspects, providing a quantitative study of dynamic spectrum management techniques; and establishing fundamental properties of LSA in a rigorous way.

We summarize the main contributions of the paper in the following:
\begin{itemize}
  \item We propose and evaluate the performance of a fair dynamic spectrum management algorithm for a single incumbent and multi-operator LSA network. The proposed algorithm works as a building block for the more challenging scenarios discussed in the rest of the work. The numerical results show
      that the spectrum allocation algorithm is fair for the network operators and does not favor any operator with more demand; but aims at minimizing the variance in mean allocated spectrum.
  \item We address LSA network operator regulation compliance framework. We impose penalty in terms of reduced allocated spectrum for the operators not complying with the LSA
      network use regulations. The proposed fair algorithm is generalized to account for the misbehaviour of the licensee operators. The effect of different penalty functions is demonstrated through numerical results and the results validate the underlying idea that the spectrum allocated to the misbehaving MNOs is not fair any more and decreases in proportion to their violation behaviour.
  \item Finally, we extend our results to the multi-incumbent multi-operator scenario. We discuss various possible operational opportunities for the future LSA systems, when more than
      one incumbent offer spectrum to more than one MNO in the same service area. Building on the proposed fair spectrum allocation algorithm for the single incumbent-multi-operator scenario and depending on the mode of operation, we propose different spectrum management protocols that coordinate radio resource allocation for the multi-incumbent multi-operator scenario and maintain independence and privacy of LSA network stakeholders (MNOs and incumbents). We characterize
      the fundamental properties of the proposed algorithms in a rigorous way by a set of theorems, and quantify the performance numerically.
\end{itemize}

It is important to notice that the MNOs may decide to offer different price for spectrum access, implying a priority ranking on the available spectrum. On the contrary, our goal is
to provide fairness in spectrum allocation to the MNOs regardless of their high demands. Under this model, the MNOs choose to pay the same $\$/$Hz cost and the proposed fair
spectrum access is more cost effective for the MNOs as compared to the exclusive spectrum or priority based LSA spectrum access. As every MNO pays the same price for spectrum
access, the utility function for the LSA system is to distribute the available bandwidth (BW) in a 'long and short term' fair manner.

\subsection{Related Work}
Spectrum sharing has been the subject of intensive investigation for the last decade or so \cite{Tehrani:2016}. The reader is referred to many surveys that have appeared in the last few years on Cognitive Radio Networks (CRNs), Opportunistic Spectrum Access (OSA) and Dynamic Spectrum Access (DSA) \cite{Masonta:2013, Tragos:2013, Paisana:2014, Tsiropoulos:2016}. These surveys focus on the applicability of a wide range of coordination protocols/methods, and spectrum access/allocation techniques, under various licensing regimes for different spectrum ranges. Most of the CRN approaches rely on sensing and opportunistic access to share spectrum resources among different users. On the other hand, LSA (and other licensed spectrum access approaches) does not include sensing and opportunistic access. As a result, most of the spectrum sharing approaches proposed in CRNs are not directly applicable to LSA. In the following, we focus on the works mainly addressing LSA and licensed sharing paradigm.

LSA regulatory framework promotes the idea of sharing on a licensed basis and is complementary to both extremes of having and not having the license to spectrum rights. In response
to the Radio Spectrum Policy Group (RSPG) report \cite{RSPG:2011}, several documents defining the LSA sharing framework and its standardization have been published by the
Electronic Communications Committee (ECC), European Commission (EC) and  European Telecommunications Standards Institute (ETSI) \cite{ECC:2014, ETSI:2013}.

Sparked by the above interest and support from regulators and standardization, several works have dealt with the important problem of resource allocation for LSA systems. We report here an account of the literature that we believe is mostly related to our work, and explain shortly how this work
differentiates as compared to the state of the art.

There are a number of works in literature that model resource allocation in LSA using auction-based mechanisms \cite{Wang:15,Wang:2015-dyspan, mcmenamy2016enhanced}  and game
theory \cite{Saadat:15,AlDaoud:2015}.
In \cite{Wang:15}, the authors present an LSA Auction (LSAA) mechanism to allocate incumbents' idle spectrum to the licensee access points from different operators, adopting a policy aiming for revenue and market regularity. In \cite{Wang:2015-dyspan}, the authors present a framework of joint auction with a mixed graph based on the LSA architecture, to coordinate the interference between different operators and accommodate more base stations, thus providing a fair competition environment for small firms competing for spectrum. In \cite{mcmenamy2016enhanced}, an auction-based approach to the LSA framework is proposed, with virtual network operators (VNOs) that share not only spectrum but also infrastructure in a cloud-based massive-MIMO system. The authors of \cite{Saadat:15} propose a two-tier evolutionary game for dynamic allocation of spectrum resources enabling coexistence of incumbents and LSA licensees, enabling fair decisions for spectrum allocation. In \cite{AlDaoud:2015}, the authors formulate the spectrum sharing problem as a non-cooperative iterated game of power control where service providers change their power levels to fix their long-term average rates at utility-maximizing values. In \cite{Hafeez:2015}, the authors consider a two-tier network where a small cell network offers offloading services to a macro network, while in return the small cells are rewarded with a number of licenses to operate in the spectrum owned by the macro network.

Other works focus on a range of different techniques to model resource allocation in the LSA framework. In \cite{Sadreddini:18}, an approach based on MNO decision policy is discussed for an LSA system that combines both pricing and rejection rules for the secondary users.
The authors of \cite{Borodakiy:14} study a one-cell LTE system using LSA, proposing a methodology to model the unreliable operation of an LSA frequency band, by employing a
multi-line queuing system with unreliable servers. In \cite{Luo:14}, a multi-carrier waveform based flexible inter-operator spectrum sharing concept is proposed, based on adapting waveforms with respect to the out-of-fragment radiation masks. In \cite{Lagunas:2015}, a resource allocation mechanism for a shared system is discussed where terrestrial networks act as an incumbent and satellite networks are the secondary operators, and joint power, bandwidth and carrier allocation schemes are proposed. The authors of \cite{Perez:2014} propose different methods to optimize cell selection and power allocation, by simulating an LTE network where an MNO is allowed to use the 2.3 GHz band as an ASA licensee. Wirth \emph{et al}. \cite{Wirth:2014} also consider antenna type and orientation, besides the types of resources addressed in \cite{Perez:2014}. In \cite{He:14}, a distributed antenna system architecture is considered and LSA is investigated on the downlink cell edge in network virtualization context.

%The works cited above either do not consider the \emph{policy enforcement} dimension or they do not consider it from a purely technical standpoint as we do; in particular, we want to emphasize that in the literature, policy enforcement aspects are not embedded in the resource allocation process. Additionally, except for \cite{Saadat:15} and \cite{Lagunas:2015} the state-of-the-art either assumes a single incumbent scenario, or does not make any specific consideration about how many incumbents are present in the LSA system, while we study thoroughly \emph{both a single and multiple incumbent case}.

To the best of our knowledge, none of the literature on LSA networks to date, has addressed both single and multi licensee and incumbent cases as in this work. We propose a framework incorporating both resource allocation and policy violation and enforcement aspects, and quantitatively study dynamic spectrum allocation aspects.

The rest of this paper is organized as follows. Section \ref{sect:system} introduces the system model and preliminaries used in this work. We discuss the proposed spectrum
management algorithm for the single incumbent multi operator case in Section \ref{sect:algorithm}. Section \ref{sect:enforcement} addresses regulation policy enforcement aspects of
LSA networks. Dynamic spectrum management for the multi
incumbent-multi operator case is dealt in detail in Section \ref{sect:multi-incumbent}. We present numerical results for the proposed algorithms in Section \ref{sec:results} and
summarize the main results in Section \ref{sect:conclusions}.
%%%%%%%%%%%%%%%%%%%%%%%%%%%%%%%%%%%%%%%%%%%%%%%%%%%%%%%%%%%%%%%%%%%%%%%%%%%%%%%%%%%%%5
\section{LSA System Model}
\label{sect:system}
We briefly introduce the building blocks of LSA network first. Fig. \ref{fig:system} shows the system diagram for a typical LSA system with 2 incumbents and 3 licensee MNOs.
The LSA repository is responsible for managing the database record including information
such as, availability of incumbent spectrum, which MNO is using what part of the spectrum, and how long it is
permitted to use it in a specific service area \cite{Jush:2012}. It also contains the information about regularized/unregulated use of the assigned spectrum by the licensee operators.

The LSA band manager is responsible for controlling the spectrum access and provides information to the
Operation, Administration, and Maintenance (OA$\&$M) unit of each MNO about the allocated spectrum for its use. OA$\&$M has a control channel for communicating such information.
OA$\&$M is the MNO interface to the LSA system, and is responsible for the base station level allocation of the
assigned incumbent spectrum to the MNO.

\subsection{Modeling incumbent's activity over time}
In our LSA system, we consider $M$ incumbents that act as primary users for the LSA spectrum.
Each incumbent can reserve and use the spectrum for a given time frame, and this
information is stored in the LSA repository.
Our work focuses on dynamic spectrum access, by the MNOs, of
the spectrum made available by the incumbent.

\begin{figure}
\centering
\includegraphics[width=6cm]{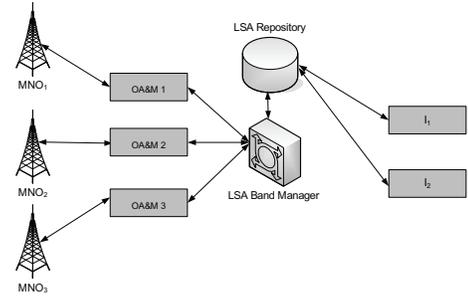}
\vspace{-0.3cm}
\caption{System diagram for an LSA system with 3 MNOs and 2 incumbents. Note that a single MNO can use more than one base station in a service area. An incumbent $m$ is denoted by $I_m$, while MNO $n$ is denoted by $MNO_n$.}
\label{fig:system}
\vspace{-0.4cm}
\end{figure}

Spectrum management is performed at different levels in LSA networks. To differentiate the different levels, we define some terms which will be required in the discussion on LSA spectrum access in the sequel.
\begin{definition}[Level 1 Algorithm]
The algorithm used to assign available spectrum from the incumbent at a specific time instant, in a specific
service area, to one or multiple licensee MNOs is called level 1 (L1) algorithm.
\end{definition}
When spectrum is available at the beginning of the frame, L1 spectrum allocation algorithm is run and the spectrum is distributed among the MNOs. The length $T_B$ of the frame is in the order of tens of minutes. We term the beginning of the frame when the spectrum is available as a 'spectrum allocation instant' in the rest of this work.
\begin{definition}[Level 2 Algorithm]
The algorithm used by the radio access network (RAN) of the individual licensee MNOs to (re)assign radio
channels to the base stations is termed as level 2 (L2) algorithm.
\end{definition}

The LSA resource allocation mechanism is limited to L1 algorithm. Once the spectrum is assigned to the individual MNOs through LSA L1 algorithm, they are free to use any L2 algorithm of their choice at network level, oblivious of L1 algorithm used. The focus of this paper is to study LSA spectrum management at level 1 while base station level resource allocation is out of the scope of this paper.

\section{Proposed Level 1 Algorithm}
\label{sect:algorithm}
The spectrum is distributed among the MNOs using L1 algorithm depending on the utility function. As already mentioned in Section \ref{sect:intro}, we choose fairness (in the mean sense) as utility function for the LSA spectrum management. We assume that the MNOs have agreed \emph{a-priori} on a fair use of shared resources such that every MNO pays the same price and agrees on receiving a fair share of the available LSA spectrum. This scenario can typically be useful for the MNOs which run services like some flavours of smart grid and vehicular communication applications, where the MNOs welcome additional LSA spectrum but the applications are not strictly time critical.

%\footnote{It implies that the application rate requirements are not delay limited and they can be met with some delay tolerance due to temporal unavailability of spectrum.}

We propose an L1 algorithm for incumbent's spectrum access and measure system fairness by evaluating mean spectrum allocated to the MNOs over certain period of time. In a fair LSA system, the variance in mean allocated spectrum to the MNOs should be as small as possible.
We assume that $N$ MNOs are sharing the spectrum in a service area. For a recent spectrum allocation history window size $W$, we define a priority index $PI_{n}(t)$
for MNO $n\leq N$ in a time slot $t$ by,
\begin{eqnarray}\label{eqn:new_allocation}
PI_{n}(t)&=& \frac{\mbox {Rewarded BW to MNO $n$ in past}}{\mbox {Sum of rewarded BW by the incumbent in past}}\nonumber\\
&=&\frac{1}{W}\sum_{j=t-W}^{t-1} \frac{B_{n}^a(j)}{\sum_{n=1}^N B_n^a(j)}
\end{eqnarray}
where $B_n^a(t)$ denotes the spectrum allocated to MNO $n$ at instant $t$.
Based on $PI_{n}$, a possible approach to achieve fairness is to distribute the available spectrum at each spectrum allocation instant in a proportionally fair manner following the approach similar to weighted fair queuing \cite{Frascolla:2016}. Depending on its allocation history, every MNO gets a fair share of the available spectrum in every spectrum allocation instant, i.e.,
\begin{equation}
B_n^a(t)=B\frac{1-PI_n}{\sum_{n=1}^N(1-PI_n)}
\label{eqn:Fair1}
\end{equation}
where $B$ is the spectrum available from the incumbent.
If demand for some of the MNOs is less than the allocated spectrum $B_n^a(t)$, the surplus spectrum is divided among the remaining MNOs using (\ref{eqn:Fair1}) again.
However, this \emph{strictly fair} approach has the drawback that an MNO may get a small share of the spectrum at a particular spectrum allocation instant, which may not be enough for its use, e.g., smart grid network providers may prefer to get spectrum infrequently, but the spectrum slice should be large enough to meet their demands.

Our approach is to maintain fairness in the mean sense for spectrum allocation across the MNOs, but we aim at satisfying temporal spectrum requirement of the MNOs as much as possible. This approach is similar to proportionally fair scheduler (PFS) where a user is scheduled for transmission that \emph{maximizes} its rate normalized by throughput \cite{Tse}. The scheduled user is less prioritized in next time slot due to larger normalization factor. It is known that PFS provides fairness to the system users and each user gets $1/N$ of the available resources when the users have infinite backlog \cite{Borst:2005}. We design an L1 algorithm which provides a similar fairness measure to the users when their spectrum demand statistics are not identical and demand is large.

In an LSA system, an MNO $n$ computes its spectrum demand $B_n^d$ by evaluating rate requirements for its network.
The set of indices for all MNOs in a service area is defined by $\mathcal{S} =\{1,\dots, N\}$.
Based on available spectrum $B$ and $PI$ for the MNOs, the proposed fair L1 algorithm at each spectrum allocation instant is presented in Algorithm \ref{alg:L1}.

\begin{remark}
We do not differentiate the MNOs which do not require spectrum in composition of set $\mathcal{S}$. All the MNOs with no requirements are modeled by the MNOs with $B_n^{d}=0$. If they are head of queue (HOQ), they do not get any spectrum.
\end{remark}

\begin{algorithm}[!t]
\label{alg:L1}
\caption{The Proposed L1 Algorithm}
\KwIn{$\mat{PI},\mathcal{S},B,\mat{B^d}$;}
\tcc{Input PI vector $\mat{PI}=\{PI_1,\dots,PI_N\}$ and demand vector $\mat{B^d}=\{B_1^d,\dots,B_N^d\}$ for all MNOs $n\in \mathcal{S}$.}
Initialize $N$ dimensional allocated spectrum vector $\mat{B^a}=\{B_1^a,\dots, B_N^a\}$ with $\mat{B^a}=\{0,\dots, 0\}$ \;
\While{$\mathcal{S}\ne \emptyset$ $\AND$ $B>0$}{
Select the MNO $n^*$ with minimum \emph{PI} from the MNOs $n\in \mathcal{S}$\;
%\tcc{Allocate to MNO $n^*$ either its full spectrum demand or the maximum available spectrum.}
$B_{n^*}^a=\min(B_{n^*}^d,B)$\;
$\mathcal{S}=\mathcal{S}-\{n^*\}$\;
$B=B-B_{n^*}^a$\;
}
\KwOut{$\mat{B^a}$;}
\end{algorithm}
The spectrum allocation is not fair temporarily (in every time slot), but the MNO at HOQ with the smallest
$PI$ is the one with the least amount of spectrum allocated in the past and gets priority.
To evaluate the temporal fairness, we compute moving average of the spectrum allocated to the MNOs over a span of $W$ time slots. Moving average $\bar{B}_n^W(t)$ in a time slot $t$ is defined by average of the allocated spectrum $B_n^a$ to an MNO $n$ and given by,
\begin{equation}
\bar{B}_n^W(t)=\frac{\sum_{j=t-W+1}^t B_n^a(j)}{W}
\end{equation}
Moving average curve is a good measure of smoothness of the spectrum allocated to the MNOs. If moving average does not diverge much from the mean, it verifies that the algorithm is fair over shorter time horizons as well. We numerically show in Section \ref{sec:results} that the proposed algorithm exhibits very good convergence behaviour.

The proposed algorithm exhibits the following properties in terms of fair resource allocation:
\begin{p}
  $B_n^a\leq B_n^d$: No MNO $n$ gets spectrum more than its demand at any spectrum allocation instant.
  \end{p}
  \begin{p}The spectrum allocation aims at minimizing variance of the mean allocated spectrum to the MNOs, especially when their average demand is larger than the mean allocation (similar to fully backlog scenario in data scheduling).
  \end{p}
  \begin{p}
  Denoting mean allocated spectrum by an incumbent by $\bar{B}$, the mean spectrum allocated to the MNO $n$ with large average spectrum demand converges to $\bar{B}/N$, i.e., $\bar{B}_n^a\to\bar{B}/N$ and more demand does not guarantee more spectrum allocation as long as other MNOs make large average demands.
  \end{p}
  \begin{p}
  The share of any MNOs $\acute{n}\ne n$ is more than $\bar{B}/N$ in spectrum allocation only if at least one MNO $n$ makes average demand less than its mean share.
\end{p}
It is worth noting that large spectrum demand does not promise large spectrum allocation as long as demand from the other MNOs is large enough. The proposed L1 algorithm helps the MNO with smaller demand to ramp up its mean allocated spectrum up to $\bar{B}/N$ and then stops at this point.

The L1 algorithm is implemented using a band manager, while the LSA repository is responsible for maintaining data for the occupancy of the bands as a result of spectrum allocation decisions. The algorithm is computationally simple to implement as \emph{PI} for every MNO is updated in every time slot and low complexity L1 algorithm is run.

\section{LSA Spectrum Sharing Policy Enforcement}
\label{sect:enforcement}
When the spectrum is allocated to the licensee MNOs, they have to comply with the
regulations of LSA operation \cite{Galiotto:2018}. For example, an MNO $n$ can access the spectrum borrowed from an incumbent within
a certain service area, using a certain carrier frequency, and during the allocated time period. However, it is
possible that the licensee MNO violates the regulations by:
\begin{enumerate}
  \item Transmitting with more power than permitted, and causing interference out of the service area.
  \item Using a different carrier frequency than allocated.
  \item Using spectrum for more time than permitted.
\end{enumerate}
In this section, we propose a framework to penalize the misbehaving MNOs. The MNOs can violate the LSA spectrum
use regulation in any of the above mentioned domains, i.e., power, frequency or time. However, penalty framework can be introduced in one domain without any loss of generality. The amount of spectrum allocated to an MNO is the main utility for the licensee
operators. If they commit any of the above mentioned violations of LSA spectrum use regulations, it is sufficient to penalize them in future spectrum assignment.

The primary goal of the proposed L1 algorithm is to provide fairness in spectrum allocation. We address policy violation framework provided that the system is using fair L1 algorithm discussed in Section \ref{sect:algorithm}. It is apparent that fairness and policy compliance frameworks are contrasting requirements for a spectrum allocation algorithm and the policy compliance framework will have significant impact on fairness of the proposed algorithm; but this is exactly the purpose of the enforcement policy. This is one of the significant contributions of the work that we handle both contrasting requirements together. As stated before, all of the MNOs pay the same $\$/Hz$ cost and are happy to receive their fair share of spectrum. A common way of penalizing the misbehaving MNOs is to impose monetary charges by the regulator. We provide an interesting view to penalize the misbehaving MNOs by monitoring their behaviour over time and reducing their mean share of spectrum, while the MNOs complying with the regulations get benefit in terms of extra spectrum allocated.

To integrate the regulatory framework in the proposed L1 algorithm, let us define a Penalty Index ($PEI$) for an MNO $n$ by,
\begin{eqnarray}
&&PEI_{n}(t)=\nonumber\\
&&\lim_{W\to \infty}\frac{\sum_{j=t-W}^{t-1}I\mbox{(Spectrum rule violated at instant $j$)}}{\sum_{j=t-W}^{t-1}I\mbox{(Spectrum assigned at instant $j$)}}\nonumber\\
&&=\frac{N_{v,n}}{N_{a,n}}
\label{eqn:PEI}
\end{eqnarray}
where $I(.)$ denotes the indicator function, which is $1$ if the argument is true. $N_{v,n}$ and $N_{a,n}$
denote the number of times spectrum rule was violated and the number of times spectrum was
assigned to an MNO $n$, respectively. The area of system level mechanisms to identify
misbehaving MNOs in LSA is well researched upon, e.g., references \cite{Zhang:2012,Jin:TMC} propose different mechanisms to identify the misbehaving users in the shared systems. In contrast to study of misbehavior detection, we confine ourselves to the study of appropriate penalty mechanisms for the MNOs not complying with the LSA enforcement regulations.

The proposed LSA L1 algorithm has fairness utility to be maximized, while the misbehaving MNOs can be penalized
when performing spectrum allocation. We merge $PEI$ with $PI$ to have a new metric selection index ($SI$), which
encompasses the spectrum rule violation framework as well.

A cumulative selection index is thus defined as,
\begin{equation}
\label{eqn:enforcemnet}
SI_{n}= \begin{cases}
          \omega PI_{n}+(1-\omega)f(PEI_n), & \mbox{if }  f(PEI_n)\leq\kappa_{PEI} \\
          \infty, & \mbox{otherwise}.
        \end{cases}
\end{equation}
where $0\leq \omega\leq 1$ is a variable which we use to change the penalty weight and
$f(PEI_n)$ is a general penalty function whose values vary between 0 and 1. $\kappa_{PEI}$ is a violation threshold on the value of $f(PEI_n)$. The $SI$ is evaluated only if the value of $f(PEI_n)$ for an MNO $n$ with $PEI_n$ is less than the violation threshold. $SI$ is set to infinity if $f(PEI_n)>\kappa_{PEI_n}$ to prevent the MNO $n$ from taking part in the spectrum access for a certain 'cooling off' period. The details of 'cooling off' period and $\kappa_{PEI}$ are left to the LSA system operators. Here we assume that $\kappa_{PEI}$ is high enough and all the competing MNOs comply with the regulations to the extent that no one is thrown out of contention.

The proposed L1 algorithm is implemented using
$SI_n$ index instead of $PI_{n}$ in the algorithm presented in Section \ref{sect:algorithm}.
It is worth noticing that the definition in (\ref{eqn:enforcemnet}) represents a general class of index/score definition where different weights ($\omega$ in our case) are assigned to
different utilities. It is left to the individual network operators to set 'scores' based on network key performance indicators of their preference \cite{Bonald:2004}. Therefore, the proposed framework in this paper is equally applicable to any general definition of $SI$.

A careful look at the use of $SI$ calculation in (\ref{eqn:enforcemnet}) for L1 algorithm unfolds a negative
feedback dilemma. A ${PEI}> 0$ reduces the probability of spectrum access for an MNO at the current allocation instant. However, by not allocating spectrum in the current spectrum allocation instant, we enhance the MNO's priority for the next spectrum allocation instant, because the proposed L1 algorithm attempts to implement fairness by decreasing $PI$ if the spectrum share of an MNO reduces. The $PI$ calculation has no mechanism to know whether the spectrum reduction is due
to 'deliberate' penalty imposed by the LSA regulator. Thus, a penalty imposed on a misbehaving MNO will not
hurt the MNO in the long run. This is what we mean by negative feedback.

To overcome this problem, we propose a slight modification in the originally proposed L1
algorithm. First, we do the spectrum allocation in each iteration of the algorithm based on $SI$ in (\ref{eqn:enforcemnet}) as before and inform the MNOs. At the same time, we perform spectrum allocation decisions based on $PI$ (solely) without making actual spectrum assignment. The 'fictitious' spectrum allocation decisions made on the basis of $PI$ are stored in a separate database. At the next spectrum allocation instant, the algorithm computes the value of $PI$ in (\ref{eqn:new_allocation}) on the basis of the 'fictitious' spectrum assignment from the database. This value is further used in (\ref{eqn:enforcemnet}) to compute $SI$. In this way, $PI$ computation is completely oblivious of the negative penalty due to regulatory violation and avoids negative feedback phenomenon.
\subsection{Penalty Functions}
\label{sect:penalty}
In this section, we propose two penalty functions which have specific characteristics:
\begin{enumerate}
  \item \textbf{Linear function:} In this case, $f(PEI_n)=PEI_n$ and all the MNOs are penalized on a linear scale
      depending on their regulatory violation statistics in (\ref{eqn:PEI}).
  \item \textbf{Power function:} In this case, $f(PEI_n)=(PEI_n)^{c}$ where $c$ is a positive constant. This
      function grows slowly in the beginning and much faster as $PEI$ increases. It is left to the individual
      LSA regulators to decide how to construct the power function. The rationale behind the
      power penalty function is to penalize the offenders mildly in the beginning and increase the penalty at a faster rate as the offense increases.
\end{enumerate}
The larger values of $c$ make the function growth rate faster. We discuss the effect of choosing weight $\omega$ and parameter $c$ through numerical evaluation in Section \ref{sec:results}.

%%%%%%%%%%%%%%%%%%%%%%%%%%%%%%%%%%%%%%%%%%%%%%%%%%%%%%%%%%%%%%%%%%%%%%%%%%%%%%%%%%%%%%%%%%%%%%%%%

\section{Multi-MNO--Multi-Incumbent Scenario}
\label{sect:multi-incumbent}
We have considered L1 algorithm for the scenario with a single incumbent and multiple MNOs in the previous section. We extend our discussion to more complex LSA system where we have $M$ incumbents in a service area, which offer spectrum to $N$ MNOs at the same time. This system settings open new opportunities for the operators and incumbents to share the network resources and are of great interest for future networks' operation. A straight forward centralized solution is to sum the available spectrum from all the incumbents and distribute it among the MNOs using the proposed L1 algorithm. However, it is not necessary that all of the incumbents and the MNOs share one LSA agreement. An MNO $n\in N$ can have separate LSA agreements with multiple incumbents. Therefore, we define a multi-incumbent spectrum management protocol, called L0 protocol hereafter. L0 protocol allocates spectrum to the MNOs by using L1 algorithms specific to the incumbent without sharing spectrum information with the MNOs that do not share the LSA agreement. Similar spectrum coordination protocols in the context of  inter--operator network sharing have been proposed in literature, e.g., \cite{Singh:commag15}.

Let us denote the set of incumbent indices available with surplus bandwidth in a service area by $\mathcal{I} =\{1,\dots, M \}$. Let us define the notion of \emph{LSA coalition} for an incumbent $m\in \mathcal{I}$ and MNOs $n\in \mathcal{S}$ by a set of one incumbent and one or multiple MNOs, which share one LSA agreement; and define by,
\begin{equation}\label{eqn:consor}
  \Omega_m =\big\{\{m,C\}:m\in I,C\subseteq \mathcal{S},C\ne \emptyset \big\}.
\end{equation}
Any L0 resource management protocol must meet the following requirements for the incumbent and the MNOs:
\begin{itemize}
  \item Every incumbent $m$ makes fair spectrum allocation for the MNOs in its LSA agreement without taking care of the MNOs' individual interest.
  \item At a spectrum allocation instant, the spectrum allocation decision of an incumbent is local, and not influenced by the spectrum allocation decisions of the other incumbents, which maintains independence of each LSA network.
  \item Every MNO aims at maximizing its short term reward (spectrum) at each spectrum allocation instant by considering the spectrum offers from various incumbents without taking care of system fairness.
\end{itemize}
These requirements imply that the inter operability protocol we propose is a facilitation mechanism for multi-MNO-multi-incumbent scenario and maintains the individual independence of every coalition, while allowing each MNO to maximize its spectrum allocation by jointly considering all the spectrum offers.

Next, we propose spectrum management mechanisms with the aim that the LSA network would like to distribute the total available spectrum efficiently (by minimizing unused spectrum) and fairly.
\begin{assumption}
We assume that all the MNOs and the incumbents agree on achieving the common goal (utility) of system fairness and efficient use of the shared spectrum resources following the requirements of LSA and mutual independence of coalitions.
\end{assumption}

\begin{assumption}
We assume that all the incumbents make the spectrum available at the same instant, i.e., the spectrum allocation instants are synchronized. It is worth noting that this is the only information the incumbents share with each other through a centralized node.
\end{assumption}
Following the requirements of the LSA operation, the proposed algorithms hold the following properties:
\begin{property}
The incumbent $m$ shares available spectrum information and its allocation decisions with only the MNOs, which share coalition $\Omega_m$.
\end{property}
This property is in line with the LSA policy. There is a separate licensing agreement between every licensee MNO and the incumbent. One MNO may have licensing agreements with more than one incumbent, but the incumbents operate independently and may not like to share any information with any other incumbent. Therefore, an MNO $n$ can have a centralized local knowledge of spectrum allocation (L1 algorithm) decisions of different coalitions with $n\in \Omega_m$, but an incumbent $m$ has no access to the spectrum allocation decisions of any other incumbent $\acute{m}\ne m$. The L0 protocol is a decentralized algorithm, which manages spectrum allocation process for all the MNOs and incumbents.
\begin{property}
Every incumbent $m$ runs L1 algorithm independently based on its own spectrum allocation history for the MNOs in $\Omega_m$.
\end{property}
This is in line with the requirements above. The spectrum allocation history for every MNO $n$ is specific to its coalition $\Omega_m$. Therefore, different incumbents $m$ make different spectrum allocation decisions when we run L1 algorithm independently.

Without loss of generality, we assume that all MNOs in set $\mathcal{S}$ have independent LSA agreements with all $M$ incumbents\footnote{It is worth noting, if an MNO is not in LSA agreement, it can be modeled by receiving an offer with zero bandwidth from the incumbent.}. Therefore, the number of MNOs in a single coalition and the number of coalitions in a service area is $|\mathcal{S}|$ and $|\mathcal{I}|$, respectively.
Based on different operating conditions and flexibility in spectrum access, we propose three possible scenarios for LSA operation.
\subsection{One Incumbent per MNO}
We allow spectrum access from \emph{at most} one incumbent for a single MNO at a single spectrum allocation instant and call it One-to-One scenario (OOS). The rationale is that using various carriers at two different frequencies for a single MNO may not be a preferred solution due to hardware, signalling issues and complications in carrier aggregation of non-contiguous frequency bands \cite{Bai:2012,Park:2013}.

When spectrum of the incumbent $m \in \mathcal{I}$ is assigned to an MNO $n \in \mathcal{S}$, we denote this assignment by $n \to m$.

The proposed L0 algorithm/protocol for this scenario works as follows:
\begin{enumerate}
  \item All $n\in \mathcal{S}$ MNOs provide their spectrum requirements $B_n^d$ to all $m\in \mathcal{I}$ incumbents with available spectrum $B^m$ through repository.
  \item L1 algorithm server for every coalition $\Omega_m$ individually runs L1 algorithm over the MNOs in $\Omega_m$ and $n\in \mathcal{S}$, and the available bandwidth $B^m$ from the incumbent $m$.
  \item Every MNO $n\in{\mathcal{S}}$ receives an offer $A_n^m\in \big(0,\min(B_n^d,B^m)\big)$ from every incumbent $m\in\mathcal{I}$.
  \item Every MNO $n\in{\mathcal{S}}$ chooses the best offer $\hat{A}_n=\max(A_n^1,\dots, A_n^M)$. The selected MNO $n^*$ is the one with the $\max(\hat{A}_1,\dots, \hat{A}_N)$.\footnote{The ties are broken by making a random selection.} It is assigned its selected incumbent $m^*$ and the corresponding spectrum from incumbent $m^*$ is $B_{n^*,m^*}^a=\hat{A}_{n}$. After assignment $n^*\to m^*$, $\mathcal{S}=\mathcal{S}-\{n^*\}$ for the next round.
  \item For the selected incumbent $m^*$, the band manager updates its available spectrum by $B^{m^*}=B^{m^*}-B_{n^*,m^*}^a$. If $B^{m^*}=0$, the incumbent $m^*$ opts out of the next round and is removed from the set $\mathcal{I}$ such that $\mathcal{I}=\mathcal{I}-\{m^*\}$.
  \item All the MNOs $n\in \mathcal{S}$ and the incumbents $m\in \mathcal{I}$ take part in the next round of the algorithm and go back to step 1.
  \item The process terminates when either $\mathcal{I}=\emptyset$ or $\mathcal{S}=\emptyset$.
\end{enumerate}
The offer $A_n^m\in\big(0,\min(B_n^d,B^m)\big)$ (in step 3) above shows the possible outcomes: the MNO $n$ is not offered any spectrum after running L1 algorithm for $\Omega_m$ or it is offered the minimum of its demand and the maximum available spectrum from incumbent $m$. Note that we used the term 'round' for one run of the algorithm from step 1 to 6.

As allocation history is different for every coalition $\Omega_m$, different MNOs can be proposed to be assigned to different incumbents in one algorithm round. However, it is necessary to assign only one $n\to m$ pair in one round (as in step 4) by the following Lemma 1.
\begin{lemma} For OOS algorithm, assigning more than one MNO-incumbent pair $n\to m$ in one round of the algorithm is not optimal for maximization of spectrum allocation for the competing MNOs.
\end{lemma}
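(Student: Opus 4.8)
The plan is to show that the one-pair-per-round rule of Step 4 weakly dominates any rule that commits several $n\to m$ pairs simultaneously in a single round, with strict domination on some instances; this proves that multi-pair assignment cannot maximize the allocated spectrum. I take the objective to be the total spectrum $\sum_n B_n^a$ delivered to the competing MNOs over the whole run of the OOS protocol at one allocation instant. The structural fact I would lean on throughout is that, within a single allocation instant, the priority indices $PI_n$ are frozen: they are computed from the past slots $t-W,\dots,t-1$ and do not change across the internal rounds, so only the surviving sets $\mathcal{S},\mathcal{I}$ and the residual bandwidths $B^m$ evolve from round to round.

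First I would prove a monotonicity property of L1 (Algorithm \ref{alg:L1}): with $B^m$ held fixed, deleting one MNO from coalition $\Omega_m$ can only weakly increase the offer $A_n^m$ to every surviving MNO $n$. This is immediate from the greedy structure of L1, where MNOs are served in increasing $PI$ order and each takes $\min(B_n^d,\text{residual }B)$: an MNO served ahead of the deleted one is untouched, while one served after it sees weakly more residual bandwidth once the deleted MNO stops consuming its share. The frozen-$PI$ observation is exactly what lets me reuse the same service order across rounds.

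Using this, I would compare the two policies. When the protocol commits only the globally best pair $n^*\to m^*$ and drops $n^*$ from $\mathcal{S}$, the MNO $n^*$ stops competing in every other coalition $\Omega_{m'}$ ($m'\neq m^*$), whose bandwidth is untouched, so by monotonicity every remaining MNO's offer from those coalitions weakly rises in the next round; on the reduced coalition $\Omega_{m^*}$ the residual bandwidth falls by $\hat{A}_{n^*}$, yet because L1 had already split $B^{m^*}$ feasibly among the members, a survivor previously slated for $m^*$ can still be served from $B^{m^*}-\hat{A}_{n^*}$ with $n^*$ gone. Consequently each MNO committed in the multi-pair round can, in the single-pair schedule, secure an offer at least as large, and the Step 4 maximum over incumbents only helps, giving a weakly larger total. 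To see the inequality is strict, take $M=N=2$ with $B^{m_1}=B^{m_2}=B_{n_1}^d=B_{n_2}^d$ and histories yielding diagonal offers $A_{n_1}^{m_1}=A_{n_2}^{m_2}$ strictly above the cross offers $A_{n_1}^{m_2}=A_{n_2}^{m_1}$: committing $n_1\to m_1$ and $n_2\to m_2$ together wastes the cross terms, while assigning one pair first lets the survivor claim the whole freed incumbent in the following round, for a strictly larger total.

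The step I expect to be the main obstacle is the comparison on $\Omega_{m^*}$ when two or more simultaneously-committed pairs point at the \emph{same} incumbent, since there the residual bandwidth and the competitor set shrink together; I must verify the bookkeeping inequality $B^{m^*}-\hat{A}_{n^*}\ge\sum(\text{remaining committed offers to }m^*)$, which holds precisely because all those offers arose from one feasible L1 partition of $B^{m^*}$. Turning the counterexample into the full weak-domination statement across all rounds, rather than stopping at a single instance, is the delicate part, and the frozen-$PI$ fact is what keeps the round-by-round exchange tractable.
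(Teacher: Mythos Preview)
Your plan is considerably more ambitious than what the paper does. The paper's proof is a bare counterexample: with $N=M=2$, it arranges that MNO~1 has the lower $PI$ in \emph{both} coalitions, so that MNO~1 receives positive offers $A_1^1,A_1^2>0$ while MNO~2 receives a positive offer only from incumbent~2, namely $A_2^2=B^2-A_1^2$. If both pairs are committed in the same round, MNO~2 is locked into $A_2^2$; if instead only $1\to 1$ is committed first, MNO~1 vacates coalition $\Omega_2$ and in round~2 MNO~2 can take $\min(B_2^d,B^2)>A_2^2$. That single instance already shows multi-pair assignment is not optimal, and the paper stops there.

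Your weak-domination programme (monotonicity of L1 under MNO deletion, frozen $PI$ across rounds, then an exchange argument) is a genuinely stronger statement and the structural observations you list are correct. However, your proposed strict instance does not work. With $B^{m_1}=B^{m_2}=B_{n_1}^d=B_{n_2}^d$ and opposite $PI$ orderings in the two coalitions, L1 gives the HOQ MNO the entire band, so the diagonal offers equal $B$ and the cross offers equal $0$. Committing both diagonals simultaneously yields total $2B$, but so does committing one pair first: after $n_1\to m_1$ the residual at $m_1$ is $0$, and $n_2$ still collects $B$ from $m_2$. Nothing is ``wasted'' and the total is identical, so this does not exhibit strict domination. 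The mechanism that actually produces a gap is the one in the paper's example: the committed MNO must have been \emph{blocking} another MNO in a coalition it does not itself select, so that removing it frees bandwidth there. Your symmetric-diagonal construction has no such blocking, because each MNO only receives a positive offer from the incumbent it ends up choosing.

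If you want to keep your broader argument, replace the counterexample with one where a single MNO is HOQ in at least two coalitions (asymmetric $PI$s, not opposite ones); then your monotonicity lemma immediately delivers the strict inequality. For the lemma as stated, though, the counterexample alone suffices and the full weak-domination machinery is unnecessary.
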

\begin{proof}
Please see Appendix \ref{sect:proof_lem1} for the proof.
\end{proof}

In every round of the algorithm, one MNO is allocated an incumbent spectrum and removed from the allocation process. If $N\leq M$, the number of rounds are $N$. If $N>M$, the minimum number of rounds is $M$ while the maximum number of rounds is $N$ as one incumbent is allowed to be assigned to multiple MNOs if spectrum is available after first assignment.
%\end{proof}
\subsection{One to One Connection}
\label{sect:OOC}
This scenario is very similar to OOS in operation. In contrast to OOS, once an MNO is assigned an incumbents's spectrum, \emph{both} incumbent and MNO do not take further part in the spectrum allocation process and this scenario is termed as one-to-one connection (OOC). Note that OOS can allocate a single incumbents's spectrum to multiple MNOs. In this mode of LSA operation, the wasted un-allocated incumbent spectrum is relatively high, which we quantify in Section \ref{sec:results}. Once some part of incumbent spectrum is allocated to the MNO, the remaining spectrum cannot be used by other MNOs even if they need it.

For the L0 protocol for the OOC, first 4 steps are exactly the same as in OOS. The other steps are modified as follows:
\begin{enumerate}
\setcounter{enumi}{4}
  \item For the selected incumbent $m^*$, $B^{m^*}=0$ after assignment $n^*\to m^*$. The incumbent $m^*$ opts out of the algorithm. It is removed from the set $\mathcal{I}$ such that $\mathcal{I}=\mathcal{I}-\{m^*\}$.
  %\item All the MNOs $n\in \mathcal{S}$ and the incumbents $m\in \mathcal{I}$ repeat step 3, 4 and 5.
  \item If $\mathcal{I}\ne\emptyset$ and $\mathcal{S}\ne\emptyset$, all the MNOs $n\in \mathcal{S}$ and the incumbents $m\in \mathcal{I}$ take part in the next round of the algorithm and go back to step 1.
  \item The process terminates when either $\mathcal{I}=\emptyset$ or $\mathcal{S}=\emptyset$.
\end{enumerate}
Lemma 1 is valid for the proposed L0 algorithm for the OOC scenario as well.\footnote{However, the proof is slightly different, but we avoid the proof to avoid repetition of results.}
%%%%%%%%%%%%%%%%%%%%%%%%%%%%%%%%%%%%%%%%%%%%%%%%%%%%%%%%%

In OOC, the number of rounds are $\min(N,M)$.
In Section \ref{sec:results}, we quantify the wasted spectrum of the incumbent for the OOC algorithm numerically.

\subsection{More than one incumbent per MNO}
In this scenario, we allow one MNO to utilize the spectrum available from multiple incumbents at a single time instant and call it multiple-connection scenario (MCS). This is the least restrictive scenario for the LSA multi--incumbent--multi-operator case.

The first 3 steps are the same as in the L0 protocol for the OOS scenario. The other steps for the proposed algorithm for this scenario work as follows:
\begin{enumerate}
\setcounter{enumi}{3}
%  \item All $n\in \mathcal{S}$ MNOs provide their spectrum requirements $B_n^d$ to all $m\in \mathcal{I}$ incumbents with available bandwidth $B^m$ through respository.
%  \item Band manager for every coalition $\Omega_m$ individually runs L1 algorithm over the MNOs in $\Omega_m$ and the available spectrum $B^m$ from the incumbent $m$.
%  \item Every MNO $n\in{\mathcal{S}}$ receives an offer $A_n^m\in(0,\min(B_n^d,B^m))$ from every incumbent $m\in\mathcal{I}$.
  \item Let us define a temporary variable $\acute{B}_n^d$ at the beginning of every round which is initialed to $\acute{B}_n^d=B_n^d$. Every MNO $n$ chooses the best offer,
  \begin{equation}
  \hat{A}_n=\max(A_n^1,\dots, A_n^M)
  \label{eqn:alg1_maxmization}
  \end{equation}
  The MNO accepts the offer $\hat{A}_n$ if $\hat{A}_n>0$. The MNO updates its demand through temporary variable  ${\acute{B}}_{n}^d=\acute{B}_{n}^d-B_{n,m^*}^a$. If ${\acute{B}}_{n}^d>0$ after update, it remains in $\mathcal{S}$. The selected incumbent opts out by $\mathcal{I}=\mathcal{I}-\{m^*\}$, as ${\acute{B}}_{n}^d>0$ implies that the selected incumbent's available spectrum $B^{m^*}=0$ after assignment to MNO $n$. The MNO $n$ accepts the next best offer by applying (\ref{eqn:alg1_maxmization}) again on the incumbents in $\mathcal{I}$, but the allocated spectrum from an offer $A_n^m$ (if any) is reduced to $\min({B}^m,\acute{B}_n^d)$ (as compared to original offer $A_n^m=\min(B^m,B_n^d)$). $\acute{B}_n^d$ is updated again. The MNO $n$ stops iterations when either $\hat{A}_n=0$ or the demand variable $\acute{B}_{n}^d=0$. Note that every MNO $n$ simultaneously and independently performs continuous iterations in step 4 for all the offers $A_n^m>0$.
  \item At the end of each round, every MNO $n$ updates its allocated spectrum by $B_n^a=\sum_{m=1}^M B_{n,m}^a$.
  \item  For every $n\in \mathcal{S}$, the MNO takes part in the next algorithm round and $B_n^d=\acute{B}_{n}^d$.
      %Otherwise, it leaves $\mathcal{S}$ and $\mathcal{S}=\mathcal{S}-\{n\}$.
  \item All the incumbents update their residual available spectrum by $B^m=B^m-\sum_{n=1}^N B_{n,m}^a$. The incumbents with $B^{m}=0$ opt out of the next round by leaving the set $\mathcal{I}$. Go back to step 2.
  \item The process terminates when either $\mathcal{I}=\emptyset$ or $\mathcal{S}=\emptyset$.
\end{enumerate}
The algorithm is characterized by the following lemmas.
\begin{lemma}
For MCS algorithm, allowing spectrum assignment (if offered) from more than one incumbent per MNO in one round (step 4) is more efficient in terms of speeding up algorithm termination, without any loss of potential spectrum allocation for a particular MNO.
\label{lem:MCS}
\end{lemma}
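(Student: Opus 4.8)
The plan is to compare two executions of the MCS protocol on the same inputs: the variant $\mathcal{A}_M$ in which each MNO may accept offers from several incumbents within a single round (step 4 as stated), and the restricted variant $\mathcal{A}_S$ in which each MNO accepts at most one offer per round. I would fix attention on an arbitrary MNO $n$ and track two quantities through the execution: its cumulative allocation $B_n^a$ and its residual demand $\acute{B}_n^d$. Throughout I would lean on Property~2, which guarantees that each incumbent $m$ computes its offer $A_n^m$ from its own allocation history and residual spectrum $B^m$ alone, so that the offers $A_n^1,\dots,A_n^M$ presented to $n$ in a given round are mutually independent: accepting the offer from one incumbent does not shrink the offer standing from any other incumbent.

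For the ``no loss'' part, I would first argue that within one round the greedy acceptance of step 4 lets $n$ accumulate exactly $\min\!\big(B_n^d,\sum_m A_n^m\big)$, the largest amount its current offers can supply. Because the offers are independent and each accepted incumbent either is exhausted (and leaves $\mathcal{I}$) or fully meets the remaining demand, the iteration stops only when $\acute{B}_n^d=0$ or no positive offer remains; taking the offers one at a time across rounds cannot raise this ceiling. The decisive point is that under $\mathcal{A}_S$ the intervening rounds give the competing MNOs $\acute{n}\neq n$ the chance to deplete precisely those incumbents $n$ was about to accept, so the set of positive offers available to $n$ can only shrink relative to $\mathcal{A}_M$. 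I would make this precise with an exchange argument: starting from any execution of $\mathcal{A}_S$, advance each acceptance of $n$ to the earliest round in which the corresponding offer is still positive; since moving $n$'s acceptances earlier only removes incumbents from the common pool earlier, this can hurt the competitors but never $n$, yielding $B_n^a(\mathcal{A}_M)\ge B_n^a(\mathcal{A}_S)$.

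For the efficiency part, I would show that every assignment realized over several rounds of $\mathcal{A}_S$ is realized within a single round of $\mathcal{A}_M$. In $\mathcal{A}_M$ a single round can simultaneously retire from $\mathcal{I}$ every incumbent that any MNO exhausts and retire from $\mathcal{S}$ every MNO whose residual demand reaches zero, whereas $\mathcal{A}_S$ retires at most one incumbent per MNO per round. Hence the monotone quantities $|\mathcal{I}|$ and $|\mathcal{S}|$ decrease at least as fast under $\mathcal{A}_M$, and strictly faster in any round in which some MNO accepts two or more offers, so the termination condition $\mathcal{I}=\emptyset$ or $\mathcal{S}=\emptyset$ is reached in no more, and typically fewer, rounds.

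The main obstacle I anticipate is the ``no loss'' claim under genuine contention, because accepting an offer changes an incumbent's residual spectrum $B^m$ and therefore, through its L1 algorithm, the offers that incumbent subsequently makes to the other MNOs. The care is entirely in the coupling: I must verify that advancing MNO $n$'s acceptances to earlier rounds keeps every offer $n$ relies on positive (it only removes incumbents $n$ itself was going to take) while the induced perturbation of the offers seen by $\acute{n}\neq n$ never feeds back to increase $n$'s own residual demand. Establishing this one-directional monotonicity, rather than any computation, is the crux; the efficiency half then follows almost immediately from the bookkeeping on $|\mathcal{I}|$ and $|\mathcal{S}|$.
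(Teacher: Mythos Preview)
Your proposal is on the right track but over-engineered, and it misses the single observation that makes the paper's proof almost immediate: within one spectrum allocation instant the $PI$ values of all MNOs are \emph{fixed}. Each incumbent's L1 algorithm therefore produces the same priority ordering in every round, and the offer $A_n^m>0$ that MNO~$n$ receives from incumbent~$m$ is spectrum that no competitor $\acute{n}$ can touch until $n$ has been served. Once you see this, the ``coupling'' you flag as the main obstacle disappears: accepting an offer early cannot perturb what any other MNO is later entitled to, because entitlement is determined by $PI$ alone.

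This undercuts one of your explicit claims. You argue that under $\mathcal{A}_S$ ``the intervening rounds give the competing MNOs $\acute{n}\neq n$ the chance to deplete precisely those incumbents $n$ was about to accept.'' That is not how L1 operates: with $PI$ fixed, $\acute{n}$ cannot be served by incumbent $m$ ahead of $n$ if $n$'s $PI$ is lower, regardless of how many rounds elapse. So the inequality $B_n^a(\mathcal{A}_M)\ge B_n^a(\mathcal{A}_S)$ you want does hold, but for a different reason than you give; in fact the two variants produce the same final allocation, only at different speeds. Your exchange argument is unnecessary machinery for this.

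The paper's proof simply compares the residual demand $B_n^d$ after each round under the two variants, notes that it is weakly smaller when multiple offers can be accepted (trivially, since more is taken per round), invokes the fixed-$PI$ observation to argue that this extra acceptance cannot hurt any remaining MNO in later rounds, and concludes by induction that the termination conditions $\sum_n B_n^d\to 0$ or $\sum_m B^m\to 0$ are reached sooner. If you replace your exchange argument with the fixed-$PI$ observation, your efficiency half (monotone decrease of $|\mathcal{I}|$ and $|\mathcal{S}|$) already matches the paper, and the ``no loss'' half becomes a one-line consequence rather than the crux.
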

\begin{proof}
Please see Appendix \ref{sect:proof_lem2} for the proof.
\end{proof}
\begin{lemma}
Allowing all the distinct MNOs with positive offers ($A_n^m>0$) to get assigned to their respective incumbents in one round (step 4) is more efficient for speeding up algorithms's termination; without any loss of potential spectrum allocation for a particular MNO.
\label{lem:MCS2}
\end{lemma}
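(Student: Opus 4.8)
The plan is to prove the two halves of the claim separately --- no loss of potential allocation, and faster termination --- while leaning on Lemma~\ref{lem:MCS}, which already shows that a single MNO may draw from several incumbents in one round without forfeiting spectrum. The comparison is between the one-assignment-per-round discipline (admit only the globally best MNO, as in Lemma~1) and the batched rule of step~4, which admits every MNO whose best offer is positive.

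First I would establish feasibility of the batched assignment. The offers $A_n^m$ for a fixed incumbent $m$ are produced by an independent L1 run over the coalition $\Omega_m$, and that run partitions the incumbent's available spectrum, so $\sum_{n} A_n^m \le B^m$. Hence, when every eligible MNO simultaneously accepts its best offer $\hat{A}_n$, the spectrum drawn from any single incumbent $m$ is at most $\sum_n A_n^m \le B^m$: no incumbent is ever over-subscribed, so the batched assignment is always realizable. This is precisely the structural feature that separates MCS from OOS --- in OOS an MNO is locked to one incumbent, so assigning it prematurely may forfeit later residual spectrum, whereas in MCS an MNO retains access to every incumbent until its demand is met, so no such forfeiture can occur.

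Next I would argue no loss. Because each incumbent's fair share $A_n^m$ is reserved for $n$ by that incumbent's own L1 run and is consumed by no other MNO served in the same round, admitting some $\acute{n}\ne n$ simultaneously removes none of the offers available to $n$. I would make this precise by an exchange argument over the set of MNOs with positive offers: batching $k$ of them yields the same cumulative allocation as performing the $k$ single assignments in any order, since (i) a partial assignment decrements only the spectrum of incumbents that the MNO alone exhausts --- by the rule that $\acute{B}_n^d>0$ after acceptance forces $B^{m^*}=0$ --- and (ii) any spectrum left at a partially used incumbent re-enters contention unchanged in the following round. The invariant to track is that the total spectrum accessible to each MNO, capped by $B_n^d$, is preserved by batching; consequently every MNO attains the same cumulative allocation it would under sequential processing. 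For termination, batching collapses up to $|\mathcal{S}|$ single-assignment rounds into one, strictly reducing the round count whenever more than one MNO is eligible, which is the claimed speed-up.

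The main obstacle is the contention case in which two MNOs share the same best incumbent $m$. Here I must verify carefully that the per-coalition fair division already apportions disjoint shares $A_n^m$ and $A_{\acute{n}}^m$ (so simultaneous acceptance cannot collide), and that whatever remains at $m$ after both shares are withdrawn is recycled correctly into the next round's contention rather than stranded. Establishing this invariant --- that cumulative accessibility per MNO is preserved from round to round irrespective of how many MNOs are batched --- is the crux on which no loss rests; the efficiency half then follows immediately from the round-count reduction.
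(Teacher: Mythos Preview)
Your proposal is correct and follows the same two-part structure as the paper --- establish that each MNO's cumulative allocation is unaffected by batching, then show batching accelerates termination. The main difference is in how the no-loss half is grounded: you invoke feasibility of the batch (your observation that $\sum_n A_n^m \le B^m$ because L1 partitions each incumbent's spectrum) and then an exchange argument over the served MNOs; the paper instead argues directly from the fact that the priority indices $PI_n$ are \emph{fixed} for the entire spectrum-allocation instant, so that in the single-assignment regime every unserved MNO receives at least the same offers in the next round as in the current one, and hence the total spectrum eventually offered to each MNO is the same in both regimes. Your exchange argument implicitly relies on this PI-freeze --- without it, re-running L1 on residual spectrum could reshuffle shares among the remaining MNOs and your accessibility invariant would not obviously hold --- so you should state it explicitly when carrying out the exchange. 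For the termination half, the paper tracks the incumbent residual and shows $\hat{B}^m(i)\ge B^m(i)$ under single assignment, which is just the quantitative form of your round-collapse observation.
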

\begin{proof}
Please see Appendix \ref{sect:proof_lem3} for the proof.
\end{proof}

In MCS, both MNOs and the incumbents join the next round if the spectrum demand is not met and available spectrum is not finished, respectively. If $N<M$, the minimum number of rounds is $N$ and the maximum $M$. If $N>M$, the minimum number of rounds is $M$ while the maximum number of rounds is $N$. Thus, the number of rounds vary from $\min(N,M)$ to $\max(N,M)$.

\begin{figure*}[!t]
 \centering
 \subfigure[Fair L1 Algorithm]
 {\includegraphics[width=2.3in]{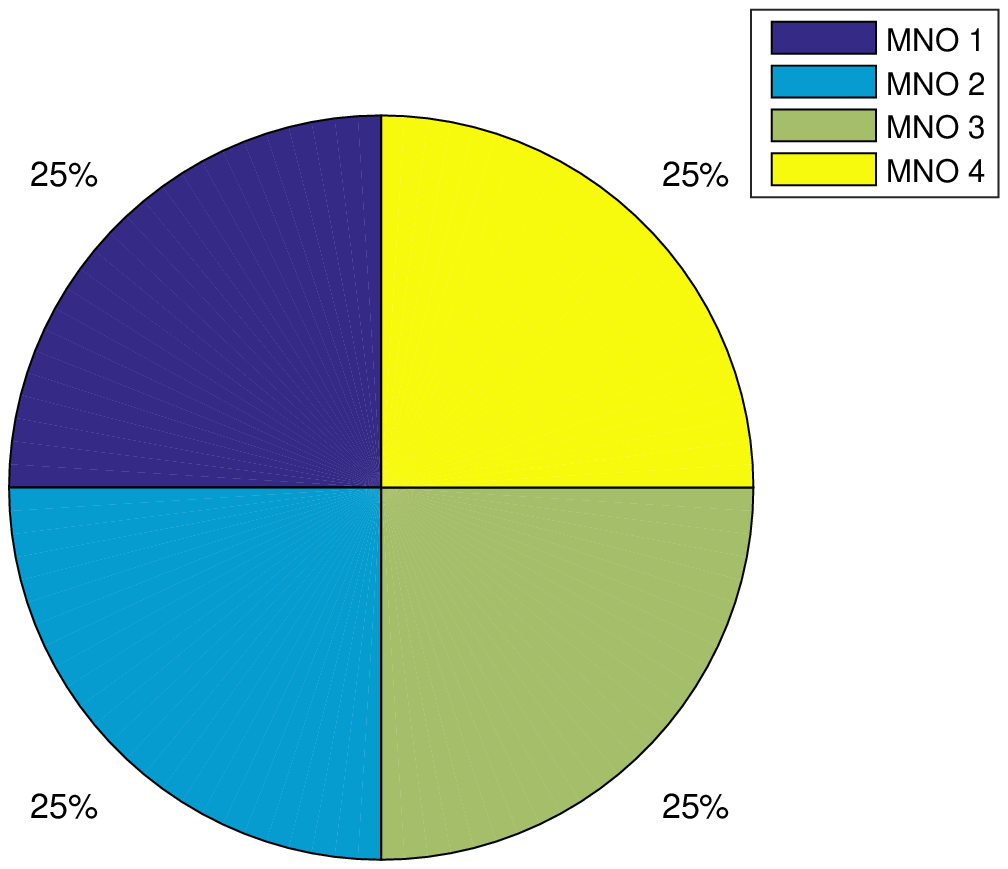}
 \label{fig:Fair-allocation}}
\subfigure[Round Robin L1 Algorithm]
 {\includegraphics[width=2.3in]{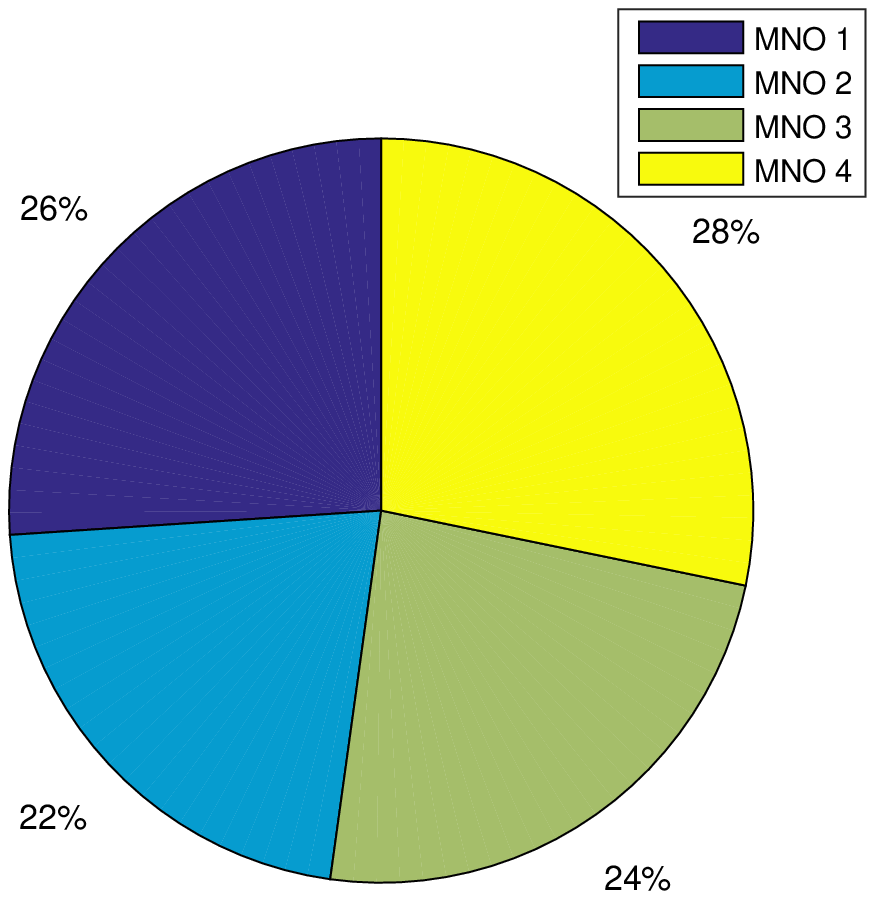}
 \label{fig:RR_allocation}}
 \vspace{-0.3cm}
  \caption{Mean spectrum allocation for 4 licensee MNOs in percentage.}
  \label{fig:allocation}
  \vspace{-0.4cm}
\end{figure*}

\section{Performance Evaluation}
\label{sec:results}
We use Monte Carlo simulations to evaluate the performance of the proposed algorithms. The
window size $W$ for computing $PI$ is set to 20 to ensure temporal fairness. As $PI$ computation for each MNO requires spectrum allocation in
last $W$ instants, we initialize simulations by having $W$ time slots with zero spectrum allocation and
random $PI$ (between 0 and 1) values for every MNO. In the
simulations, we consider $N=4$ and incumbent spectrum $B$ is normalized to 100 units without loss of
generality. We simulate $10^4$ spectrum allocation instants to compute mean
spectrum allocation for each MNO.

Let us define the mean allocated spectrum to an MNO (normalized by the offered incumbent spectrum) by,
\begin{equation}
\bar{B}_n^a=\frac{1}{T}\sum_{t=1}^T \frac{B_n^a(t)}{B(t)}
\end{equation}
where $T$ is the number of spectrum allocation instants.

\subsection{Single Incumbent Scenario Evaluation}
Fig.~\ref{fig:allocation} shows the mean spectrum allocation to 4 MNOs for the proposed fair L1 algorithm and Round Robin (RR) L1 algorithm. The RR algorithm provides spectrum access to all competing MNOs on RR basis. On its turn, the MNO is assigned spectrum according to its demand if possible. If the incumbent still has some spectrum available, next MNO gets its turn. If the demand cannot be met due to scarcity of incumbent spectrum, the MNO gets whatever the spectrum is available on its turn. On next spectrum allocation instant, spectrum allocation starts with the next MNO's turn.

\begin{figure*}
\centering
 \subfigure[MNO 1]
  	{\includegraphics[width=2.3in]{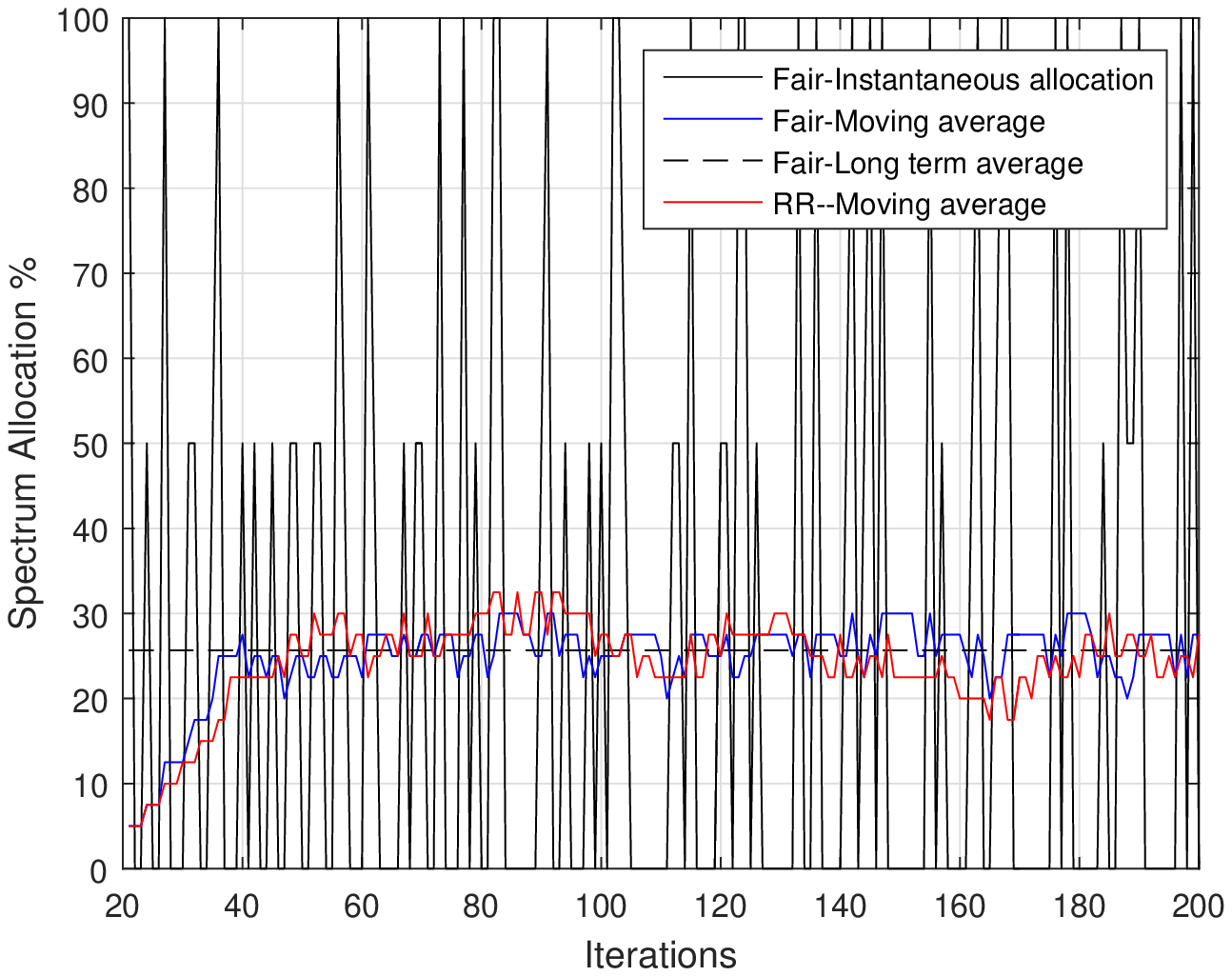}
  \label{fig:n1}}
  \subfigure[MNO 4]
  	{\includegraphics[width=2.3in]{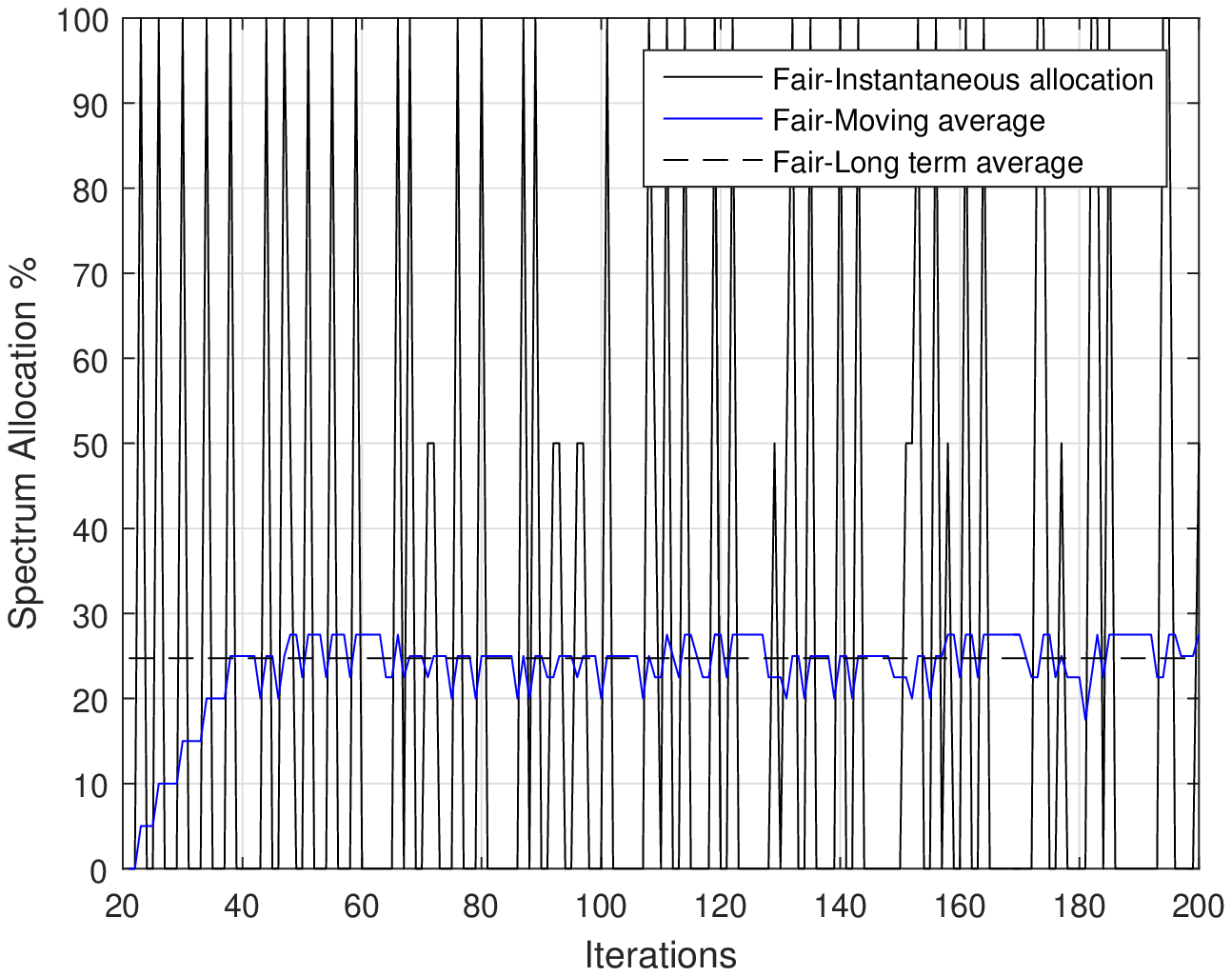}
  \label{fig:n4}}
   \subfigure[Performance of Fair Algorithm in \cite{Frascolla:2016}.]
  	{\includegraphics[width=2.3in]{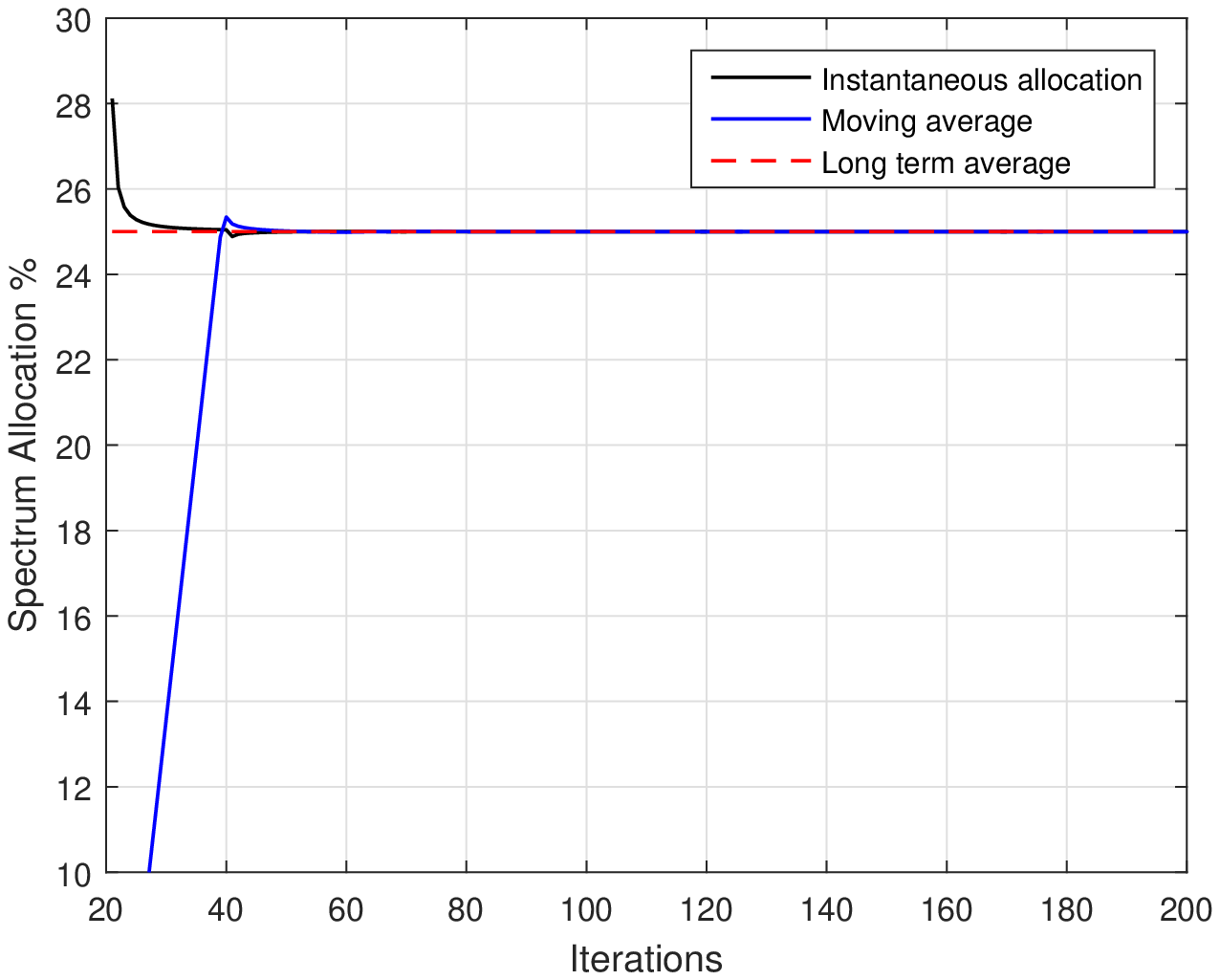}
  \label{fig:SFA}}
   \caption{Performance evaluation for short term spectrum allocation for MNO 1 and MNO 4. The 21-200 spectrum allocation
   instants are plotted (where first 20 instants are initialized with a random $PI$ for
   every MNO.)}
	\label{fig:fairness}
\vspace{-0.3cm}
\end{figure*}

At each spectrum allocation instant, MNO 1, 2 and 3 choose the spectrum demand randomly out of a vector of values [50,100] with uniform probability, while the spectrum demand for MNO 4 is fixed to 100. It is clear in Fig.~\ref{fig:Fair-allocation} that our proposed algorithm is fair in spectrum allocation and uniformly distributes spectrum among the MNOs regardless of higher demand from the MNO 4. On the other side, it is shown in Fig.~\ref{fig:RR_allocation} that RR algorithm is not fair and allocates more spectrum to MNO 4 with more demand. We observe that the demand for MNO 4 affects the spectrum allocation of the other MNOs for RR algorithm and different demands result in different mean $\%$ allocations for the other MNOs.

Fig.~\ref{fig:fairness} shows the instantaneous spectrum allocation statistics for the proposed fair L1 algorithm and other competing algorithms. As MNOs 1, 2 and 3 have symmetric spectrum demand distribution and allocation statistics, we plot statistics for MNO 1 only in Fig. \ref{fig:n1}, while statistics for MNO 4 are plotted in Fig. \ref{fig:n4}. The instantaneous allocation for the MNOs varies between either zero and its demand. As clear from
Fig.~\ref{fig:fairness}, when the MNO is allocated full spectrum, it has relatively low chance of accessing the
spectrum in the next few allocation instants. Similarly, a long sequence of zero allocation is usually
followed by large spectrum allocation. This justifies the algorithm's aim to achieve fairness in spectrum allocation for the MNOs. Although, MNO 1 and 4 are allocated the same mean spectrum in spite of more mean demand of MNO 4, the short term allocation pattern varies. MNO 4 gets full spectrum of 100 units for more instants as compared to MNO 1, but $PI$ makes sure that short term fairness is achieved by reducing the number of instances when MNO 4 is allocated spectrum.

To study the short term behaviour of the proposed algorithm, we plot the moving average of the spectrum allocated to the MNOs.
It is evident form Fig.~\ref{fig:n1} and Fig.~\ref{fig:n4} that the moving average of the allocated spectrum for MNO 1 and MNO 4 converges to its mean after very few iterations and diverges marginally from the mean afterwards. For comparison purpose, we also plot the moving average for MNO 1 for RR L1 algorithm in Fig. \ref{fig:n1}, which shows more variation around the mean as compared to our fair algorithm due to inability of the RR algorithm to adapt to the spectrum allocation history.

Fig. \ref{fig:SFA} shows the performance of fair algorithm for LSA in \cite{Frascolla:2016} where incumbent's spectrum is distributed among the MNOs using weight fair queuing principle and (\ref{eqn:Fair1}). The algorithm is fair in the mean sense as well, while moving average converges to mean after a few iterations. This is attributed to the reason that $PI$ of all the MNOs converge to the same value after the initial period and the spectrum is equally distributed among the MNOs in each allocation instant afterwards. This is strictly fair, however, may not be a desirable option for the MNOs who prefer large share of spectrum.

\begin{figure}[!t]
 \centering
 \subfigure[Linear penalty function]
 {\includegraphics[width=2.3in]{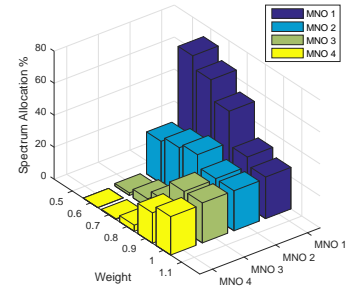}
 \label{fig:linear_enforcemnt}}
\subfigure[Power penalty function]
 {\includegraphics[width=2.3in]{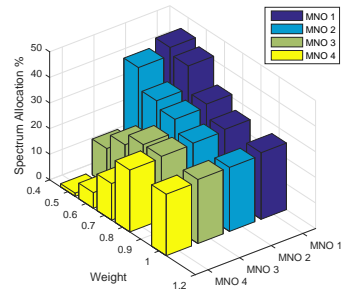}
 \label{fig:exp_enforcement}}
 \vspace{-0.3cm}
  \caption{Mean spectrum share for 4 licensee MNOs for different re-enforcement penalty functions.}
  \label{fig:enforcement}
  \vspace{-0.5cm}
\end{figure}

\subsection{Policy Enforcement Penalty Evaluation}

\begin{figure*}[!t]
 \centering
 \subfigure[OOS]
 {\includegraphics[width=2.3in,trim={0 2cm 0 0cm},clip]{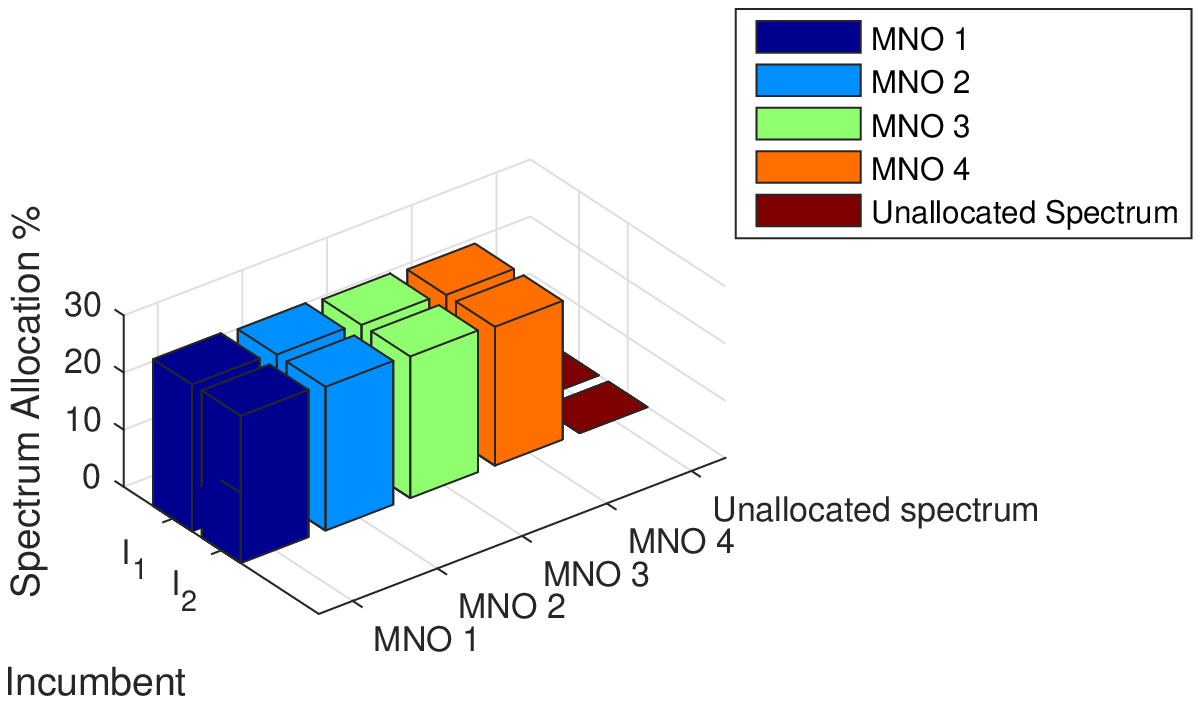}
 \label{fig:OOS}}
\subfigure[OOC]
 {\includegraphics[width=2.3in,trim={0 2cm 0 0cm},clip]{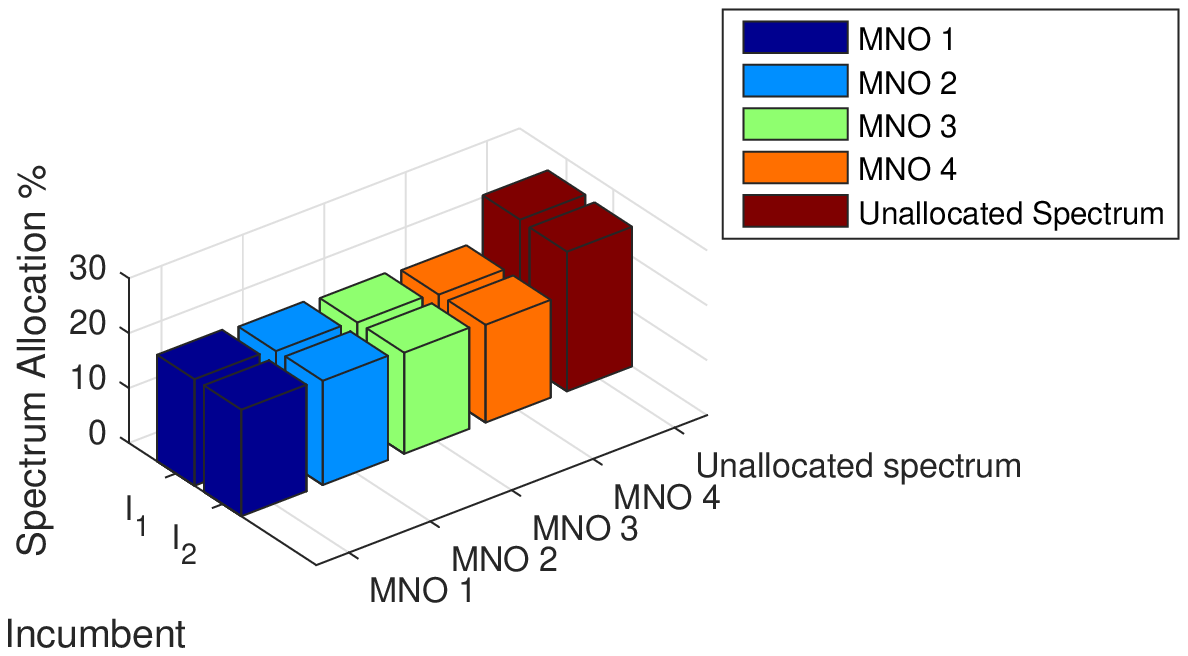}
 \label{fig:OOC}}
 \subfigure[MCS]
 {\includegraphics[width=2.3in,trim={0 2cm 0 0cm},clip]{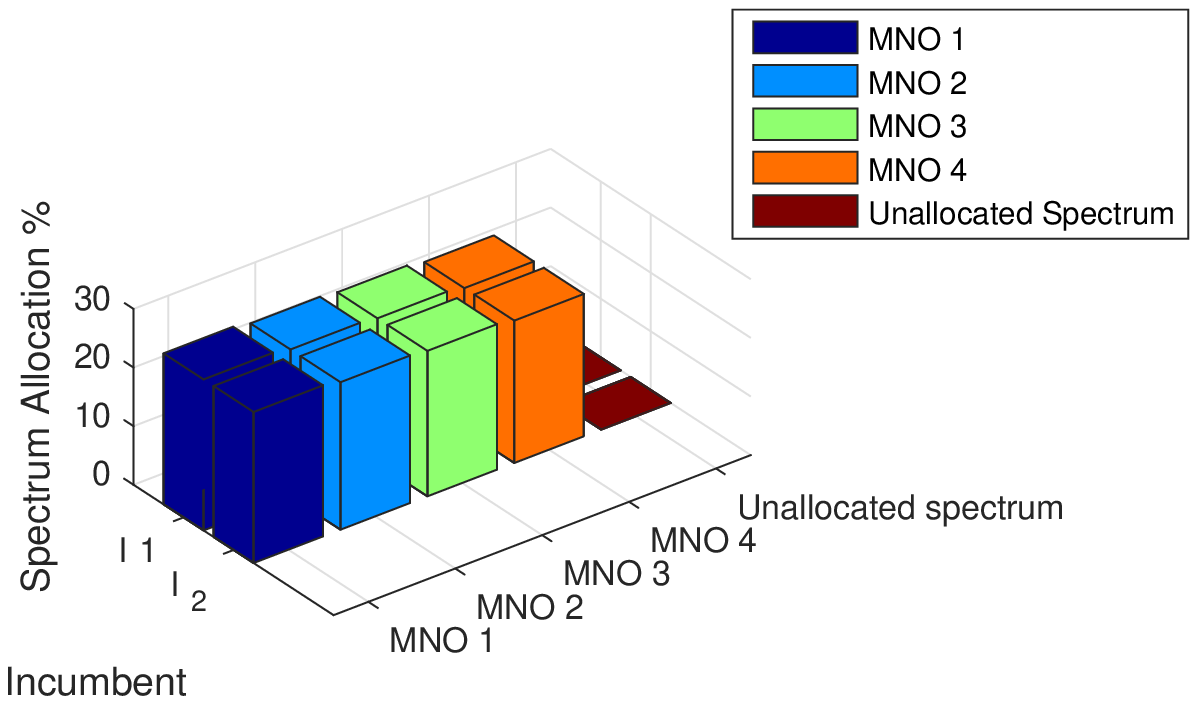}
 \label{fig:MCS}}
 \vspace{-0.2cm}
  \caption{Comparison of spectrum allocation protocols for multi-incumbent scenario. $\bar{U}_m$ for each incumbent has been plotted as well.}
  \label{fig:multi-incumbent}
  \vspace{-0.4cm}
\end{figure*}

We evaluate and compare the effect of penalty for violating the spectrum use regulations in Fig. \ref{fig:enforcement}. We plot the mean allocated spectrum as a
function of weight $\omega$. Note that an increasing value of $\omega$ implies more weight (importance) towards
fairness. As $\omega$ decreases, the weight for regulatory violation penalty increases, proportionally. We
model the parameter $PEI$ such that MNOs 1, 2, 3 and 4 have average $PEI$ values
0, 0.1, 0.2 and 0.3 respectively; the value of 0 implies no violation for MNO 1. We set $\kappa_{PEI}$ very high and assume that no MNO has value of $f(PEI)\geq \kappa_{PEI}$.

In Fig.~\ref{fig:linear_enforcemnt}, we evaluate the mean spectrum allocated to each MNO when our penalty
function is linear as stated in Section \ref{sect:enforcement}. When $\omega=1$, the available spectrum is
distributed among the MNOs equally. When $\omega$ starts decreasing, MNO 3 and MNO 4 with
large $PEI$ suffer while the other MNOs receive proportional incentive for behaving within the regulations. The
MNO 1 gains incentive monotonically as a function of decreasing $\omega$. However, MNO 2 gets incentive
initially, but is penalized when $\omega$ is very low due to increasing weight for violation
penalty and its (relatively) small $PEI$ becomes significant.
In Fig. \ref{fig:exp_enforcement}, we evaluate the effect of violation penalty when the penalty function is
a power function, i.e. $(PEI)^c$. In the numerical evaluation, we use $c=2$. In general, the larger the $c$, the slower
the penalty function growth rate in the beginning and steeper afterwards. As in Fig. \ref{fig:linear_enforcemnt}, the MNOs with large $PEI$ suffer more in terms of spectrum access as $\omega$
decreases. In contrast to linear function, the power function penalizes the MNOs at a
smaller rate initially; indeed the MNOs do not lose much share (as compared to linear function) of spectrum
when $\omega$ is relatively large. As $\omega$ decreases further, the MNOs with large $PEI$ are penalized. It
is interesting that contrary to linear function case, MNO 2 with $PEI=0.1$ is not penalized at all due to its
small $PEI$, which validates our idea behind the power penalty function that the MNOs with small violations are
not penalized much.
\subsection{Evaluation of Multi-incumbent Scenarios}
To evaluate performance for the multi-incumbent-multi-operator scenario, we assume  a scenario with $M=2$ and $N=4$ such that all of the MNOs share both $\Omega_1$ and $\Omega_2$ coalitions. Both of the incumbents offer the same amount of spectrum at each spectrum allocation instant, without loss of generality, with $B^1=B^2=100$. The MNO spectrum demand settings are the same used before for the results in Fig. \ref{fig:allocation}.

Fig. \ref{fig:multi-incumbent} compares the mean spectrum allocated to the MNOs for both incumbents 1 and 2. The spectrum allocation for all the MNOs is fair at the individual LSA network coalition level for all three proposed protocols. As the demand and offered spectrum statistics are the same for both incumbents, mean spectrum allocation statistics for both incumbents are the same as well. This verifies that the proposed protocols coordinate LSA spectrum sharing mechanism very well and maintain the individual fairness property of the individual L1 algorithm without explicitly sharing inter-coalition spectrum allocation information.
To measure the efficiency of the spectrum management algorithms, we define the mean unallocated spectrum factor $\bar{U}_m$ as the mean spectrum for an incumbent $m$ which is not allocated to any MNO, when the sum of spectrum demand from all the MNOs is greater or equal to the sum of the spectrum offered by the incumbents. Note that it is not meaningful to consider the instances where sum of the spectrum demand is less than the cumulative spectrum provided by the incumbents as unallocated spectrum may well be due to lack of demand. Thus, $\bar{U}_m$ for an incumbent $m$ is defined as,
\begin{eqnarray}
\bar{U}_m&=& \frac{1}{Q}\sum_{j=1}^Q \Big(1- \frac{\sum_{n=1}^N B_{n,m}^a(j)}{B^m(j)}\Big)
\end{eqnarray}
where $B_{n,m}^a(j)$ is the allocated spectrum to MNO $n$ from the spectrum offered by incumbent $m$ at spectrum allocation instant $j$.
$Q$ is the number of time instances when the sum of the spectrum demand from all the MNOs is equal or greater than the sum of the offered incumbent spectrum. In the numerical evaluation, as MNO minimum spectrum demand is 50, the sum of the MNO spectrum demand for 4 MNOs is never less than 200, which equals the overall spectrum offered by the incumbents at any spectrum allocation instant. Therefore, parameter $Q$ includes all the spectrum allocation instances for our spectrum demand distribution and the incumbent offered spectrum statistics.

Fig. \ref{fig:multi-incumbent} shows that the proposed L0 protocols for OOS and MCS scenarios distribute the available incumbent spectrum fully and uniformly among all the MNOs and the unallocated spectrum is zero for both incumbents. However, L0 protocol for OOC protocol is unable to allocate full incumbent spectrum, due to the one-to-one constraint on the assignment. As evident from Fig. \ref{fig:OOC}, the sum of the mean spectrum allocated to all MNOs is nearly $75\%$ and almost $25\%$ spectrum from each incumbent is wasted in spite of the demand from the MNOs. As an example, if two MNOs with demand of $50$ units of spectrum are assigned to each incumbent, the rest of the spectrum is not allocated to the MNOs regardless of the demand from the other MNOs and is a waste of resources.

To measure the efficiency of the spectrum management L0 protocols for the considered scenarios, we define another term, dissatisfaction factor $D$, for an LSA network by mean of the difference between the sum of the demand from all MNOs and the sum of the spectrum allocated at a spectrum access instant.
Let us define $D$ by,
\begin{equation}
D = \frac{1}{L}\sum_{j=1}^L\Big(1-\frac{\sum_{m=1}^M\sum_{n=1}^{N}B_{n,m}^a(j)}{\sum_{n=1}^N B_{n}^d(j)}\Big)
\label{eqn:dissatisfaction}
\end{equation}
where $L$ is the number of instances when the sum of demanded spectrum is less than the sum of spectrum offered by the incumbents. In (\ref{eqn:dissatisfaction}), only those instants are counted when the demand is less than or equal to the total spectrum offered by all the incumbents, i.e., every MNO should be able to get spectrum according to its demand in the optimal case.

For a comparison, we consider an LSA system with $M=2$ and $N=3$. All the MNOs choose spectrum demand randomly and uniformly from the vector $[50,100]$; and the spectrum made available from each incumbent is 100 as before. Fig. \ref{fig:unsatisfaction} shows the dissatisfaction factor $D$ for the proposed L0 protocols for different scenarios of the multi-incumbent case. L0 protocol for the MCS scenario has zero dissatisfaction factor, while it is the worst for OOC. L0 protocol for the MCS scenario allows MNOs to access carrier frequency bands from any number of incumbents, while it prohibit to use more than one band for one MNO for the OOC and OOS scenarios. Therefore, if the demand of an MNO is not met with spectrum from one incumbent, it is not able to fulfill the demand by accessing spectrum from the other incumbents even if the spectrum is available. This effect is more pronounced in OOC where the constraint is more tight and even incumbent is not allowed to offer residual spectrum to more than one MNO.

\section{Conclusions}
\label{sect:conclusions}
This work deals with dynamic spectrum management for LSA networks in different operating scenarios. First, we discuss a spectrum allocation algorithm for a single incumbent-multi licensee operators case which provides fair spectrum resources to all the operators. Then, we adapt the proposed algorithm for the case when the LSA licensee operators do not comply with the LSA regulations and propose appropriate penalties in terms of reduced spectrum allocation. Finally, we extend our results to multi-incumbent-multi-operators case and propose various algorithms for inter-operability of different LSA coalitions. We study the fundamental characteristics of the proposed algorithms and numerically compare the mean allocated spectrum for each licensee operator by evaluating unallocated spectrum and dissatisfaction metrics. The numerical results quantitatively show the tradeoff in terms of spectrum allocation and flexibility in spectrum access for the proposed algorithms for different LSA operation scenarios. The spectrum allocation is more efficient for the proposed L0 protocol for the MCS scenario, while it is least efficient for the OOC scenario.

\appendices
\section{Proof of Lemma 1}
\label{sect:proof_lem1}
We prove by contradiction that there exists at least a single case where multiple assignments in a single round are suboptimal. Suppose we have $N=2$ and $M=2$ and we allow multiple MNOs to accept their best offers in one round, i.e., every MNO accepts the best available spectrum offer if $\hat{A}_n>0$. Suppose MNO 1 receives offers from both incumbent 1 and 2 while MNO 2 receives offer from incumbent 2 only. Also, assume that $A_1^1>A_2^1$, $A_1^2>A_2^2$, $A_{2}^2>A_{2}^1$ and $A_1^1>A_1^2$. MNO 1 accepts incumbent 1's offer and refuses incumbent 2, while MNO 2 accepts incumbent 2's offer. We denote the pair assignments by $1\to 1$ and $2\to 2$, respectively. After the first round, MNOs 1 and 2 cannot take part further in spectrum allocation. If $B_2^d>B_{2,2}^a$, the assignment $2\to 2$ is not optimal for MNO 2 as it did not get a chance to compete for the spectrum $B^2$ (instead of allocated $B_{2,2}^a=B^2-A_1^2$) from incumbent 2 and $B^1-B_{1,1}^a$ from incumbent 1. Thus, MNO 2 was not allocated additional spectrum $\max(B^2,B^1-B_{1,1}^a)-B_{2,2}^a$ due to multiple single round assignments, which proves the lemma.

\begin{figure}
\centering
  	\includegraphics[width=3.0in]{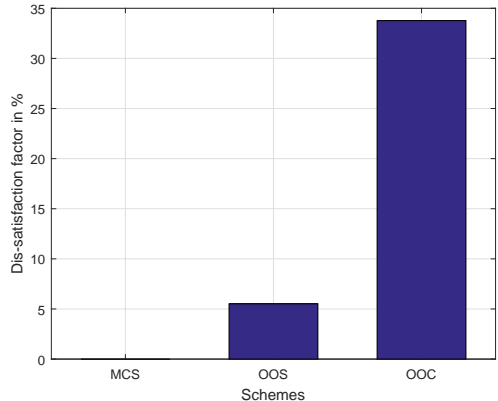}
  \vspace{-0.4cm}
   \caption{Comparison of dis-satisfaction factor for the proposed L0 protocols of various scenarios.}
	\label{fig:unsatisfaction}
\vspace{-0.4cm}
\end{figure}
\section{Proof of Lemma 2}
\label{sect:proof_lem2}
Let us denote the demand of MNO $n$ by $B^d_n$ and $B^{d_s}_n$ after a round in the cases, when MNOs are allowed to select multiple incumbents or only one incumbent per round, respectively. Note that $B_n^d(i+1)=B_n^d(i)-\sum_{m=1}^M B_{n,m}^a(i)$ for round $i$. To show that the selection of multiple incumbents in the same round allows a faster termination of the algorithm without any potential disadvantage to the MNOs, we first show that $B^d_n \leq B^{d_s}_n$. Let us start with the first round. If each MNO receives an offer only from one incumbent, then the equality holds. If at least one MNO receives an offer by two or more incumbents, its demand will obviously be lower if the MNO can accept multiple offers in the same round. Due to the fixed $PI$ of all MNOs for a spectrum allocation instant, the offers accepted during the first round will not negatively affect any of the remaining MNOs for the subsequent rounds. It should be noted that the second round (and each round after that) is exactly the same in operation as the first round, with the exception of available spectrum per incumbent and requested spectrum per MNO. Hence, for all rounds $i$, $B^d_n (i+1)\leq B^{d_s}_n(i+1)$ without loss of generality. This implies that the algorithm termination conditions $\sum_N B_n^d\to 0$ or $\sum_M B^m\to 0$ are reached at faster rate and proves Lemma \ref{lem:MCS}.
\section{Proof of Lemma 3}
\label{sect:proof_lem3}
The sum of spectrum offer to any MNO depends on its $PI$ and does not depend on single or multiple assignments of distinct pairs per round. To prove the lemma, we first show that the sum of spectrum allocated to the MNOs over rounds $j\geq i $ remains the same regardless of the single or multiple assignments $n\to m$ of various MNOs per round. If an MNO $n$ receives an offer $A_n^m (i)>0$ from an incumbent $m$ in a round $i$, it implies that it is the minimum available guaranteed spectrum for an MNO based on its $PI$ in relation to $PIs$ of the other MNOs.

In case of multiple assignments per round, if all the MNOs accept the offers $A_n^m (i)>0$ in round $i$, no MNO loses any potential spectrum offer in round $i+1$. In MCS, the residual spectrum of the incumbents is on offer again in round $j>i$ and the MNOs are allowed to participate again in round $i+1$ if their spectrum demand $B_n^d(i)$ was not met in round $i$. They compete for the residual spectrum that was not assigned in round $i$ as some of the MNOs are not interested in accepting the offers.

If only a single assignment per round is allowed, all the MNOs other than the one selected in round $i$, receive (at least) all the offers again in the next round as $PIs$ do not change. In addition, some of the MNOs may receive extra spectrum offers as the selected MNO $n^*$ might have not accepted some offers in round $i$.

In both cases, the MNO with the higher $PI$ cannot be offered spectrum as long as the MNO with the lower $PI$ does not explicitly refuse to accept the spectrum offer. Therefore, the amount of spectrum offered to an MNO is independent of the single or multiple distinct assignments $n\to m$ per round.

To prove that multiple assignment case is faster in algorithm convergence, let us denote the available spectrum  of incumbent $m$ by $B^m(i+1)$ and $\hat{B}^m(i+1)$
after round $i$ in the cases of  multiple assignments or single assignment per round, respectively. Note that $B^m(i+1)=B^m(i)-\sum_{n=1}^N B_{n,m}^a(i)$ for round $i$. In case of a single assignment per round there exists only one $(n,m)$ such that $B_{n,m}^a>0$. Hence $\hat{B}^m(i)\geq {B}^m(i)$, $\forall i$, which implies that the algorithm termination condition $\sum_M B^m \to 0$ is reached at a faster rate for multiple assignments, and proves Lemma 3.

\bibliographystyle{IEEEtran}
\bibliography{bibliography_nm}

\end{document}